\newtheorem{teo}{Theorem}%[section]
\newtheorem{pro}[teo]{Proposition}%[section]
\newtheorem{lem}[teo]{Lemma}%[section]
\theoremstyle{definition}
\newtheorem{rem}[teo]{Remark}%[section]
\newtheorem{de}[teo]{Definition}%[section]
\newtheorem{exa}{Example}%[section]
\title[New binary and ternary LCD codes]{New binary and ternary LCD codes}
\author{Carlos Galindo, Olav Geil, Fernando Hernando and Diego Ruano}
\curraddr{\texttt{Carlos Galindo and Fernando Hernando:} Instituto
Universitario de Matem\'aticas y Aplicaciones de Castell\'on and
Departamento de Matem\'aticas, Universitat Jaume I, Campus de Riu
Sec. 12071 Castell\'{o} (Spain)\\
\texttt{Olav Geil and Diego Ruano:} Department of Mathematical Sciences, Aalborg University, Skjernvej 4A, 9220 Aalborg East (Denmark).
}
\email{{\rm Galindo:} galindo@uji.es; {\rm Geil:} olav@math.aau.dk; {\rm Hernando:} carrillf@uji.es; {\rm Ruano:} diego@math.aau.dk}
\date{}
\thanks{Supported by the Spanish Ministry of Economy/FEDER-UE (grants MTM2015-65764-C2-2-P and MTM2015-69138-REDT), the University Jaume I (grant PB1-1B2015-02) and the Danish Council for Independent Research (grant DFF-4002-00367).}
\keywords{LCD codes; complementary dual;  subfield subcodes; $J$-affine variety codes; toric codes}
\begin{document}

\begin{abstract}
LCD codes are linear codes with important cryptographic applications. Recently, a method has been presented to transform any linear code into an LCD code with the same parameters when it is supported on a finite field with cardinality larger than 3. Hence, the study of LCD codes is mainly open for binary and ternary fields. Subfield-subcodes of $J$-affine variety codes are a generalization of BCH codes which have been successfully used for constructing good quantum codes. We describe binary and ternary LCD codes constructed as subfield-subcodes of $J$-affine variety codes and provide some new and good LCD codes coming from this construction.
\end{abstract}

\maketitle

\section*{Introduction}

It is well-known that the hull $C \cap C^\bot$ of a linear code $C$, with (Euclidean) dual $C^\perp$, does not vanish in general; but when this holds, the code $C$ is called a linear code with  complementary dual (LCD).  LCD codes were introduced by Massey \cite{Massey} to provide an optimum linear coding solution for the two-user binary adder channel and prove the existence of asymptotically good LCD codes; previously he studied LCD cyclic codes (reversible codes) in \cite{mass}. The literature contains considerable information about the characterization and construction of this family of codes, being \cite{t-h, ya-ma, sen} some of the oldest references. Apart from applications in data storage, LCD codes are also useful for obtaining lattices \cite{20P} and in network coding \cite{4P, 45P}. Interesting applications of LCD codes in cryptography have been recently discovered. They play a role in counter-measures to passive and active side-channel analyses on embedded cryptosystems. We remark that the implementation of cryptographic algorithms could suffer attacks (SCA or FIA) for extracting the secret key. SCA (side-channel attacks) consist of passively recording some leakage to retrieve the key and FIA (fault injection attacks) consist of actively perturbing the computation to alter the output. One of the main sources of interest in LCD codes comes from the fact that they provide linear complementary pairs of codes.  A linear complementary pair of codes   $(C_1,C_2)$ consists of two codes in $\mathbb{F}_q^n$ with dimensions $k$ and $n-k$ such that $C_1+ C_2 = \mathbb{F}_q^n$. These pairs have been used in \cite{bri, CarletAttacks} for protecting implementations of symmetric cryptosystems against SCA, with level of protection depending on the minimum distance of $C_2$, and FIA, with level of protection depending on the minimum distance of $C_1$.

The above mentioned applications have produced a huge interest in LCD codes and many papers on this topic appeared very recently. Important contributions are \cite{Carlet2} and \cite{Pellikaan}, where the authors prove
that, for $q>3$, $q$-ary LCD codes are as good as $q$-ary linear codes. That is, for every linear code over a field $\mathbb{F}_q$ with more than 3 elements, one can construct an LCD code with the same parameters from that code. 

With respect to binary and ternary LCD codes, the best known LCD codes are reversible and are derived from BCH codes \cite{Jin, LiLi, rao, LiLiLiu}. As it is well-known, subfield-subcodes from codes over large fields can give rise to good codes over small fields. BCH codes are subfield-subcodes of Reed-Solomon codes and families of BCH LCD and cyclic LCD codes have been constructed in \cite{LiLi} and \cite{LiLiLiu} for few lengths. Some good binary reversible codes of odd length $n$, for $5 \leq n \leq 257$, are given in \cite{rao}, where the authors determine all the parameters for $5 \leq n \leq 99$.

In this paper we consider LCD codes coming from subfield-subcodes of the so-called $J$-affine variety codes. These codes are images of evaluation maps from vector spaces of polynomials in several variables generated by  suitable monomials. Our LCD codes may be regarded as a generalization of BCH codes, including extensions to the case of more variables, and allow us to reach a wider variety of lengths. Their metric structure and duality properties have been studied and successfully used to construct quantum stabilizer codes in previous works of the authors \cite{galindo-hernando, gal-her-rua, QINP,QINP2, ieee}.

Binary  $J$-affine variety codes for one variable with odd length  provide reversible codes which essentially coincide with those in \cite{rao}; however, our codes are derived from generic results (Theorems \ref{35} and \ref{te:ConDimension}) which can simplify some computations. Nevertheless, with one variable, we obtain unknown ternary reversible and binary  and ternary nonreversible LCD codes as shown in Section \ref{ejemplos}. The nonreversible codes obtained with only one variable have the same dimension and minimum distance as the reversible ones, but their length is one unit more.

Considering more than one variable, we get a much broader spectrum of lengths. Theorems \ref{pro:ejeSS} and \ref{hiperb}, and Remark \ref{bbb} provide a wide variety of new LCD codes with previously unknown lengths, having some of them good parameters. As a sample, in Section \ref{ejemplos} we give several families of LCD codes which, according to \cite{codetables}, contain many optimal or best known linear codes.

Decoding procedures may be useful for the cryptographic applications of LCD codes. Decoding algorithms have  been described for some families of codes considered in this paper \cite{fit, mar}. We believe that these algorithms may be adapted to all of them.

\section{LCD $J$-affine variety codes}
\label{secuno}

 In this section we consider $J$-affine variety codes. These linear codes were introduced in \cite{QINP} and used for constructing quantum codes. We review some results concerning self-orthogonality that will allow us to characterize LCD codes in this family. Finally, we give parameters for some families of LCD $J$-affine variety codes.

Along this paper, $q=p^r$ will be a positive power of a prime number $p$. Let $m \geq 1$ be an integer and fix $m$ integers $N_j>1$ such that $N_j-1$ divides $q-1$ for $j=1, 2, \ldots, m$. Let $\mathcal{R}:= \mathbb{F}_q [X_1,X_2, \ldots,X_m]$ be the ring of polynomials in $m$ variables and with coefficients in the finite field $\mathbb{F}_q $. Consider a subset $J \subseteq \{1,2, \ldots, m\}$ and the ideal $I_J$ in $\mathcal{R}$ generated by the binomials $X_j^{N_j} - X_j$ when $j \not \in J$ and by $X_j^{N_j -1} - 1$ otherwise. Set $Z_J = \{P_1, P_2, \ldots, P_{n_J}\}$ the zero-set of $I_J$ over $\mathbb{F}_q$. Note that the $j$th coordinate, for $j \in J$, of the points in $Z_J$ is different from zero and $n_J = \prod_{j \notin J} N_j \prod_{j \in J} (N_j -1)$. Furthermore, denote $T_j = N_j -2$  when $j \in J$ and $T_j = N_j -1$ otherwise; then define
$$
\mathcal{H}_J  = \{0,1,\ldots,T_1\}\times \{0,1,\ldots,T_2\} \times\cdots\times\{0,1,\ldots,T_m\}
$$
and, for any $\boldsymbol{a}=(a_1,a_2, \ldots, a_m) \in \mathcal{H}_J$, set $X^{\boldsymbol{a}} = X_1^{a_1} X_2^{a_2}\cdots X_{m}^{a_m}$.

Consider the quotient ring $\mathcal{R}_J:= \mathcal{R}/I_J$ and the evaluation map $\mathrm{ev}_J: \mathcal{R}_J \rightarrow \mathbb{F}_{q}^{n_J}$ given by $\mathrm{ev}_J(f) = \left(f(P_1), f(P_2), \ldots, f(P_{n_J}) \right)$, where $f$ denotes both the equivalence class and any polynomial representing it. As is well-known, ${\mathrm{ev}}_{{J}}$ is a bijection, and in particular $\{{\mathrm{ev}}_{J} (X^{\boldsymbol{a}}) \mid \boldsymbol{a} \in \mathcal{H}_J \}$ constitutes a basis for the image.

\begin{de}\label{def:unouno}
{\rm Let $\Delta$ be a non-empty subset of $\mathcal{H}_J $. The {\it $J$-affine variety code given by $\Delta$} is the $\mathbb{F}_q$-vector subspace $E^J_\Delta$ of $\mathbb{F}_q^{n_J}$ generated by $\mathrm{ev}_J (X^{\boldsymbol{a}})$, $\boldsymbol{a} \in \Delta$. We denote by $C^J_\Delta$ the (Euclidean) dual code of $E^J_\Delta$. }
\end{de}

Observe that the dimension of $E^J_\Delta$ equals the cardinality of $\Delta$, and consequently  the dimension of $C^J_\Delta$ is $n_J - \mathrm{card} (\Delta)$. Note that the univariate case contains the family of Reed-Solomon codes and for $J = \{1, 2, \ldots , m\}$ and $N_j =q$ for every $j$, one has a generalized toric code \cite{GTC}. It is also clear that the $J$-affine variety code $E^J_\Delta$ is LCD if and only if its dual code $C^J_\Delta$ is LCD.

The following result, which can be found in \cite[Proposition 1]{QINP}, gives the metric structure of $J$-affine variety codes.

\begin{pro}
\label{prop1}
Let $J \subseteq \{ 1 , 2, \ldots , m\}$. Consider $\boldsymbol{a}, \boldsymbol{b} \in \mathcal{H}_J$ and let $X^{\boldsymbol{a}}$ and $X^{\boldsymbol{b}}$ be two monomials representing elements in $\mathcal{R}_J$. Then, the inner product $\mathrm{ev}_J ( X^{\boldsymbol{a}}) \cdot \mathrm{ev}_J (X^{\boldsymbol{b}})$ is different from $0$ if,  and only if, the following two conditions are satisfied.
\begin{itemize}
\item For every $j \in J$, it holds that $a_j + b_j \equiv  0 \mod (N_j -1)$, (i.e.,  $a_j = N_j -1 - b_j$ when $a_j  + b_j > 0$ or $a_j=b_j=0$).
\item For every $j \notin J$, it holds that \begin{itemize}
\item either $a_j  + b_j > 0$ and $a_j + b_j \equiv 0 \mod (N_j -1)$,  (i.e.,  $a_j = N_j -1 - b_j$  if $0 < a_j, b_j < N_j -1$ or $(a_j,b_j) \in \left\{(0,N_j -1), (N_j -1,0), (N_j -1,N_j -1)  \right\}$ otherwise),
\item or $a_j = b_j = 0$ and $p \not | ~ N_j$.
\end{itemize}
\end{itemize}
\end{pro}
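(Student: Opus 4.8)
The plan is to reduce the inner product to a product of one-variable power sums over the coordinate sets of $Z_J$, and then evaluate each factor with the usual geometric-series identity.

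Writing $P=(x_1,\dots,x_m)\in Z_J$, the first observation is that
$$\mathrm{ev}_J(X^{\boldsymbol a})\cdot\mathrm{ev}_J(X^{\boldsymbol b})=\sum_{P\in Z_J}P^{\boldsymbol a}P^{\boldsymbol b}=\sum_{P\in Z_J}\prod_{j=1}^m x_j^{a_j+b_j},$$
with the convention $0^0=1$ (consistent with the monomial $X^{\boldsymbol 0}$ evaluating to $1$ at every point). The defining binomials of $I_J$ show that $Z_J$ is a Cartesian product $U_1\times\cdots\times U_m$, where $U_j$ is the set $\mu_{N_j-1}$ of $(N_j-1)$-th roots of unity of $\mathbb{F}_q$ if $j\in J$, and $\{0\}\cup\mu_{N_j-1}$ if $j\notin J$; here the hypothesis $N_j-1\mid q-1$ is what guarantees $\mathrm{card}\,\mu_{N_j-1}=N_j-1$ inside $\mathbb{F}_q$. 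Consequently the sum factors,
$$\mathrm{ev}_J(X^{\boldsymbol a})\cdot\mathrm{ev}_J(X^{\boldsymbol b})=\prod_{j=1}^m\Big(\sum_{x\in U_j}x^{a_j+b_j}\Big),$$
and the left-hand side is nonzero in $\mathbb{F}_q$ precisely when every factor is.

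It remains to decide when $S_j:=\sum_{x\in U_j}x^{c}$ vanishes, where $c:=a_j+b_j\ge 0$. For the contribution of $\mu_{N_j-1}$: if $(N_j-1)\mid c$ then $x^c=1$ for all $x$, giving the value $N_j-1$; otherwise, choosing a generator $g$ of the cyclic group $\mu_{N_j-1}$, one gets $\sum_{k=0}^{N_j-2}g^{kc}=(g^{c(N_j-1)}-1)/(g^c-1)=0$ since $g^c\ne 1$. Here one uses that $N_j-1$ is invertible in $\mathbb{F}_q$, which again follows from $N_j-1\mid q-1=p^r-1$, so $p\nmid N_j-1$. Hence for $j\in J$ we get $S_j\ne 0\iff c\equiv 0\pmod{N_j-1}$ (the case $c=0$, i.e. $a_j=b_j=0$, being included). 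For $j\notin J$ we split on $c$: if $c>0$ then $0^c=0$, so $S_j$ equals $N_j-1$ or $0$ as above and $S_j\ne 0\iff c>0$ and $(N_j-1)\mid c$; if $c=0$ then $0^0=1$ and $S_j=1+(N_j-1)=N_j$, which is nonzero in $\mathbb{F}_q$ exactly when $p\nmid N_j$. Gathering these conditions over all indices $j$ gives the two bulleted conditions in the statement.

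The only delicate point is the bookkeeping of cases — $j\in J$ versus $j\notin J$, and within each the subcases $c=0$ versus $c$ a positive multiple of $N_j-1$ — together with the care needed for the convention $0^0=1$ and the fact that, although $N_j-1$ is a unit of $\mathbb{F}_q$, the integer $N_j$ itself may be divisible by $p$. The parenthetical reformulations in the statement (the identities $a_j=N_j-1-b_j$ and the exceptional pairs) are then a routine translation of ``$a_j+b_j$ is $0$ or a positive multiple of $N_j-1$'' using the ranges $0\le a_j,b_j\le T_j$, which force $a_j+b_j\le 2(N_j-1)$ when $j\notin J$ and $a_j+b_j\le 2(N_j-2)$ when $j\in J$; I foresee no genuine obstacle beyond this.
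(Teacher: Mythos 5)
Your argument is correct and is essentially the standard proof of this result (the paper itself gives no proof, citing \cite[Proposition 1]{QINP}, where the same factorization of the inner product over the Cartesian product $Z_J=U_1\times\cdots\times U_m$ followed by evaluation of the one-variable power sums is used). All the delicate points — the convention $0^0=1$, the invertibility of $N_j-1$ in $\mathbb{F}_q$ versus the possible vanishing of $N_j$, and the case bookkeeping from the ranges of $a_j,b_j$ — are handled correctly.
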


The following remark illustrates how to construct LCD $J$-affine variety codes. 

\begin{rem}
Fix $q= 3^3$, $m=2$, $J = \{ 1, 2\}$, $N = N_1=N_2=3^3$, and look for a set $\Delta \subset \mathcal{H}_J$ such that $E^J_\Delta$ is an LCD code. From Proposition \ref{prop1}, we deduce that the points in $\mathcal{H}_J$ can be divided into two sets. The first set consists of what we will call symmetric points, and they are $((N-1)/2,(N-1)/2)=(13,13)$, $(0,0)$,  $((N-1)/2,0)=(13,0)$ and  $(0,(N-1)/2)=(0,13)$. For a symmetric point $\boldsymbol{a}$, we have that $\mathrm{ev}_J (X^{\boldsymbol{a}})$ is orthogonal to $\mathrm{ev}_J (X^{\boldsymbol{b}})$ for all $\boldsymbol{b} \in \mathcal{H}_J \setminus \{\boldsymbol{a}$\} and $\mathrm{ev}_J (X^{\boldsymbol{a}}) \cdot \mathrm{ev}_J (X^{\boldsymbol{a}}) \neq 0$. Thus, suitable sets $\Delta$  can contain,  or not contain, symmetric points. The rest of the points in $\mathcal{H}_J$ are called asymmetric. In order to have an LCD code and when one desires $\Delta$ to contain an asymmetric point $(a_1,a_2)$, $a_1, a_2 \leq N-1$, the point $(N-1-a_1,N-1-a_2)$ (named reciprocal) must also be added to $\Delta$, and vice versa. Notice that, here, $N-1$ should be identified with zero. Indeed, $\mathrm{ev}_J (X^{(a_1,a_2)})$ is not orthogonal to $\mathrm{ev}_J (X^{(N-1-a_1,N-1-a_2)})$ and they are both orthogonal to $\mathrm{ev}_J (X^{\boldsymbol{b}})$ for every $\boldsymbol{b}$ different from $(a_1,a_2)$ and $(N-1-a_1,N-1-a_2)$. So to get suitable sets $\Delta$, we can consider any of the above given symmetric points and pairs as described, for instance one may have $(7,16), (19,10) \in \Delta$.

The procedure is a bit different when $J = \{2\}$ instead of $J = \{ 1, 2\}$. First we notice that in the case treated above the obtained dual code is also generated by the evaluation of monomials and, therefore,  it is a $J$-affine variety code. In this second case, assuming that we desire that $(0,10) \in \Delta$, our code be LCD and the dual code be also $J$-affine variety code, again by Proposition \ref{prop1}, we must add to $\Delta$ the points $(0,16),(26,16)$ and $(26,10)$.
\end{rem}

The following result formalizes the ideas in the previous remark characterizing LCD $J$-affine variety codes whose duals are again $J$-affine variety codes.

\begin{teo}\label{teo}
Let $\Delta$ be a subset of $\mathcal{H}_J$. The $J$-affine variety code $E_\Delta^J$ is LCD with its dual code also being $J$-affine variety if and only if $\Delta$ is a union of sets $\mathcal{R}_{\boldsymbol{a}}$ containing $\boldsymbol{a}$ and those elements $\boldsymbol{b} \in \mathcal{H}_J$ such that:
\begin{itemize}
\item For every $j \not\in J$, $b_j=N_j -1 -a_j $ if $0 < a_j < N_j -1$, and $b_j \in \{0, N_j -1\}$ otherwise.
\item For every $j \in J$, $b_j=N_j -1 -a_j $ if $0 < a_j < N_j -1$, and $b_j$ equals $0$  otherwise. Moreover $b_j$ may also be equal to $a_j$ in the case when either $a_i=0$ or $a_i=N_i-1$ for some $i \notin J$.
\end{itemize}

Any two distinct exponents $\boldsymbol{b}$ and $\boldsymbol{b}'$ in $\mathcal{R}_{\boldsymbol{a}}$ are called reciprocal, and $\boldsymbol{a}$ will be named symmetric whenever $\mathrm{card} (\mathcal{R}_{\boldsymbol{a}})=1$. Points that are not symmetric are called asymmetric.
\end{teo}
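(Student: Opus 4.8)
The plan is to encode everything into the symmetric matrix $M=\bigl(\mathrm{ev}_J(X^{\boldsymbol a})\cdot\mathrm{ev}_J(X^{\boldsymbol b})\bigr)_{\boldsymbol a,\boldsymbol b\in\mathcal{H}_J}$. Since $\mathrm{ev}_J$ is a bijection, $\{\mathrm{ev}_J(X^{\boldsymbol a})\mid \boldsymbol a\in\mathcal{H}_J\}$ is a basis of $\mathbb{F}_q^{n_J}$, and as the Euclidean inner product is nondegenerate, $M$ is invertible; in particular $M$ has no zero row. Write $\rho(\boldsymbol a)=\{\boldsymbol b\in\mathcal{H}_J\mid \mathrm{ev}_J(X^{\boldsymbol a})\cdot\mathrm{ev}_J(X^{\boldsymbol b})\neq 0\}$ (a nonempty set, by the previous remark) and $\rho(\Delta)=\bigcup_{\boldsymbol a\in\Delta}\rho(\boldsymbol a)$. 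The first, and purely formal, step is to show that $E_\Delta^J$ is LCD with $C_\Delta^J$ a $J$-affine variety code if and only if $\rho(\Delta)=\Delta$. Indeed, because the $\mathrm{ev}_J(X^{\boldsymbol a})$ form a basis, $\mathrm{ev}_J(X^{\boldsymbol c})$ lies in $E_\Gamma^J$ exactly when $\boldsymbol c\in\Gamma$, and it lies in $C_\Delta^J=(E_\Delta^J)^\perp$ exactly when $\boldsymbol c\notin\rho(\Delta)$; hence, if $C_\Delta^J$ is a $J$-affine variety code it must coincide with $E^J_{\mathcal{H}_J\setminus\rho(\Delta)}$, and comparing the dimensions $n_J-\mathrm{card}(\Delta)$ and $n_J-\mathrm{card}(\rho(\Delta))$ forces $\mathrm{card}(\rho(\Delta))=\mathrm{card}(\Delta)$ (conversely this last equality turns the always-valid inclusion $E^J_{\mathcal{H}_J\setminus\rho(\Delta)}\subseteq C_\Delta^J$ into an equality). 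Using that $E^J_{\Delta_1}\cap E^J_{\Delta_2}=E^J_{\Delta_1\cap\Delta_2}$ for subsets of a basis, once $C_\Delta^J=E^J_{\mathcal{H}_J\setminus\rho(\Delta)}$ the code $E_\Delta^J$ is LCD exactly when $\Delta\cap(\mathcal{H}_J\setminus\rho(\Delta))=\emptyset$, i.e.\ $\Delta\subseteq\rho(\Delta)$; together with $\mathrm{card}(\rho(\Delta))=\mathrm{card}(\Delta)$ this says $\rho(\Delta)=\Delta$, and $\rho(\Delta)=\Delta$ clearly delivers both properties.

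Next I would reduce $\rho(\Delta)=\Delta$ to a closure condition. Since $M$ is symmetric and $\rho(\boldsymbol a)\neq\emptyset$, the inclusion $\Delta\subseteq\rho(\Delta)$ is automatic as soon as $\rho(\boldsymbol a)\subseteq\Delta$ for every $\boldsymbol a\in\Delta$; thus $\rho(\Delta)=\Delta$ is equivalent to requiring $\rho(\boldsymbol a)\subseteq\Delta$ for all $\boldsymbol a\in\Delta$. Consequently the admissible sets $\Delta$ are precisely the unions of the minimal such sets $\overline{\boldsymbol a}:=\bigcup_{k\geq 0}\rho^{\,k}(\{\boldsymbol a\})$, i.e.\ of the connected components of the graph on $\mathcal{H}_J$ that joins $\boldsymbol a$ and $\boldsymbol b$ whenever $\mathrm{ev}_J(X^{\boldsymbol a})\cdot\mathrm{ev}_J(X^{\boldsymbol b})\neq 0$. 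It then remains to identify these classes $\overline{\boldsymbol a}$ with the sets $\mathcal{R}_{\boldsymbol a}$ of the statement, that is, to check that the relation ``$\boldsymbol a$ and $\boldsymbol b$ lie in a common $\mathcal{R}$-set'' has the $\overline{\boldsymbol a}$ as its equivalence classes, so that ``$\Delta$ is a union of the sets $\mathcal{R}_{\boldsymbol a}$'' means precisely ``$\Delta$ is a union of connected components''.

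The remaining, and most laborious, ingredient is to read $\rho$ off Proposition \ref{prop1}. One obtains $\rho(\boldsymbol a)=\prod_{j=1}^m R_j(a_j)$, where for $j\in J$ one has $R_j(a_j)=\{N_j-1-a_j\}$ if $a_j>0$ and $R_j(0)=\{0\}$, while for $j\notin J$ one has $R_j(a_j)=\{N_j-1-a_j\}$ if $0<a_j<N_j-1$, $R_j(N_j-1)=\{0,N_j-1\}$, and $R_j(0)=\{0,N_j-1\}$ or $\{N_j-1\}$ according as $p\nmid N_j$ or $p\mid N_j$. Since each $\rho^{\,k}(\{\boldsymbol a\})$ is again a product $\prod_j R_j^{(k)}(a_j)$, one can compute $\overline{\boldsymbol a}$ coordinate by coordinate: a coordinate $j$ with $0<a_j<N_j-1$ alternates between $a_j$ and $N_j-1-a_j$ under iteration, all such coordinates being forced to switch in unison, whereas a coordinate $i\notin J$ sitting at a boundary value $0$ or $N_i-1$ becomes, after one step, the full block $\{0,N_i-1\}$ and thereafter behaves independently; it is exactly the presence of such a block that, through one further application of $\rho$, also returns a coordinate $j\in J$ with $0<a_j<N_j-1$ to the value $a_j$ (and not only to $N_j-1-a_j$) — the content of the ``moreover'' clause. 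Assembling this coordinatewise analysis, while keeping every exponent inside $\mathcal{H}_J$ throughout, identifies $\overline{\boldsymbol a}$ with $\mathcal{R}_{\boldsymbol a}$ and proves the theorem. I expect this last step — bookkeeping of the boundary values $0$ and $N_j-1$, the role of the condition $p\mid N_j$, and the coupling between a boundary coordinate outside $J$ and the coordinates in $J$ — to be the genuine obstacle; the first two steps are formalities once one notes that $M$ is invertible.
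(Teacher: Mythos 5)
Your proof is correct and follows essentially the same route as the paper's: both arguments rest on Proposition \ref{prop1} to read off which evaluation vectors are non-orthogonal and conclude that $\Delta$ must be closed under the reciprocal relation, whereupon the dual is the $J$-affine variety code on the complement of $\Delta$ and the hull is trivial. Your write-up merely formalizes, via the invertible Gram matrix and the closure operator $\rho$ (whose orbits are the sets $\mathcal{R}_{\boldsymbol{a}}$), the steps the paper's proof leaves as ``clear.''
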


\begin{proof}
Let $\boldsymbol{a}$ and $\boldsymbol{b} \in \mathcal{R}_{\boldsymbol{a}}$ and assume $0 < a_j < N_j -1$ for $j \notin J$. By Proposition \ref{prop1}, $\mathrm{ev}_J (X^{\boldsymbol{a}})$ is not orthogonal to $\mathrm{ev}_J (X^{\boldsymbol{b}})$, and therefore $\boldsymbol{a}$, $\boldsymbol{b} \in \Delta$ to guarantee that $E_\Delta^J$ is LCD. It is also clear that if $\Delta = \mathcal{R}_{\boldsymbol{a}}$, the (Euclidean) dual code $C_\Delta^J$ is generated by the complement of $\Delta$ in $\mathcal{H}_J$. 

Finally, when $a_j = N_j -1$ or   $a_j = 0$ for $j \not\in J$, for constructing an LCD code whose dual is generated by monomials, one should have in $E_\Delta^J$, and not in $C_\Delta^J$, those vectors $\mathrm{ev}_J (X^{\boldsymbol{b}})$ which are not orthogonal to $\mathrm{ev}_J (X^{\boldsymbol{a}})$. This proves the result.
\end{proof}

\begin{rem}
\label{la15}
{\rm
The cardinality of  the sets $\mathcal{R}_{\boldsymbol{a}}$ described in Theorem \ref{teo} is a power of $2$. It is $1$ or $2$ if no coordinate of $\boldsymbol{a}$ equals $0$ or $N_j -1$ for some $j \not \in J$.

When $J \neq \{1,2, \ldots, m\}$ and $p$ does not divide $N_j$ for $j \not \in J$, one can also get LCD $J$-affine variety codes by including in $\Delta$ subsets $\mathcal{R'}_{\boldsymbol{a}}$ of $\mathcal{R}_{\boldsymbol{a}}$ with cardinality a power of $2$ whose elements have the $i$-th coordinate equal to either $0$ or $N_i -1$ for some indices $i$ in the set $\{1, 2, \ldots, m\} \setminus J$ and the corresponding evaluation vectors are not orthogonal. In this case, reasoning for $\Delta= \mathcal{R'}_{\boldsymbol{a}}$, the dual code is generated by the evaluation of the monomials in $\mathcal{H}_J \setminus \mathcal{R}_{\boldsymbol{a}}$ and polynomials which are linear combinations of monomials with exponents in $\mathcal{R}_{\boldsymbol{a}}$ and orthogonal to the evaluation of the monomials in $\mathcal{R'}_{\boldsymbol{a}}$. In generic cases, the dual space, contains a vector space with dimension $n_J - \mathrm{card}(\mathcal{R'}_{\boldsymbol{a}})$ which proves that $E_\Delta^J$ is an LCD code. {\it When considering this type of codes, we only consider the elements in $\mathcal{R'}_{\boldsymbol{a}}$ as reciprocal.}

As an easy example, setting $p=3$, $q=3^3$, $m=2$, $N_1=N_2=14$, $J= \{2\}$, $\boldsymbol{a}=(0,1)$ and $\Delta = \mathcal{R'}_{\boldsymbol{a}} = \{(0,1),(0,12)\}$, it holds that $E_\Delta^J$ is a LCD code of dimension $2$. Notice that $\Delta = \mathcal{R}_{\boldsymbol{a}} = \{(0,1),(0,12),(13,12),(13,1)\}$ gives another LCD code with dimension $4$.

}
\end{rem}

Some of the codes presented in \cite[Corollary 3.6]{Carlet1} can be recovered by considering the univariate case of $J$-affine variety codes, with $J=\{1\}$. The following result states parameters for LCD codes coming from the univariate case of $J$-affine variety codes. LCD codes obtained from subfield-subcodes of $J$-affine variety codes will be presented in the next section.

\begin{pro}\label{pro:uni}
Let $N$ be a positive integer such that $N-1$ divides $q-1$ and set another positive integer $1 \leq \delta \le (N-1)/2$ if $N-1$ is even and $\delta \le N/2 -1$ otherwise. For $J=\emptyset$ and $\Delta = \{ 0, 1, \ldots, \delta-1, N -\delta, \ldots , N-2, N-1  \}$, it holds that the dual code $C^J_\Delta$ of the $J$-affine variety code $E^J_\Delta$ is LCD with parameters $[N,N-2 \delta ,2\delta]_q$. Furthermore, for  $J=\{1\}$ and $\Delta = \{ 0, 1, \ldots, \delta-1, N -\delta, \ldots , N-2\}$,  the codes $E^J_\Delta$ and  $C^J_\Delta$ are LCD and MDS with parameters $[N-1, 2 \delta -1 , N - 2 \delta +1 ]_q$ and $[N-1,N-2 \delta ,2\delta]_q$, respectively.
\end{pro}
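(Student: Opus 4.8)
The plan is to reduce everything to the case $m=1$ with $N_1=N$ and to treat the two choices of $J$ separately. For $J=\emptyset$ the ideal $I_J$ is generated by $X_1^{N}-X_1$, so $Z_J=\{0\}\cup\mu$, where $\mu$ denotes the group of $(N-1)$-th roots of unity in $\mathbb{F}_q$; since $N-1\mid q-1$ this set has exactly $N$ elements, hence $n_J=N$ and $\mathcal{H}_J=\{0,1,\dots,N-1\}$. For $J=\{1\}$ the ideal is generated by $X_1^{N-1}-1$, so $Z_J=\mu$, $n_J=N-1$ and $\mathcal{H}_J=\{0,1,\dots,N-2\}$. The bounds imposed on $\delta$ give $2\delta\le N$ in both cases, so the two index intervals defining $\Delta$ are disjoint and $\mathrm{card}(\Delta)$ equals $2\delta$ in the first case and $2\delta-1$ in the second; consequently $\dim E^J_\Delta$ and $\dim C^J_\Delta=n_J-\mathrm{card}(\Delta)$ take the announced values.

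Next I would obtain the LCD property from Theorem \ref{teo} by displaying $\Delta$ as a union of the sets $\mathcal{R}_{\boldsymbol a}$. In the univariate setting these are, for $J=\emptyset$, the set $\mathcal{R}_0=\{0,N-1\}$ together with $\mathcal{R}_a=\{a,N-1-a\}$ for $0<a<N-1$, and, for $J=\{1\}$, the singleton $\mathcal{R}_0=\{0\}$ together with $\mathcal{R}_a=\{a,N-1-a\}$ for $0<a<N-1$. The hypothesis on $\delta$ forces $a\le\delta-1<(N-1)/2$, so for $1\le a\le\delta-1$ each $\mathcal{R}_a$ is a genuine reciprocal pair, these pairs are pairwise disjoint, and they are disjoint from $\mathcal{R}_0$; a direct check then gives $\Delta=\mathcal{R}_0\cup\bigcup_{a=1}^{\delta-1}\mathcal{R}_a$ in both cases. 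Theorem \ref{teo} therefore yields that $E^J_\Delta$ is LCD and that $C^J_\Delta$ is again a $J$-affine variety code, and since a linear code is LCD precisely when its dual is, $C^J_\Delta$ is LCD as well. Finally, because $\Delta$ is a union of complete classes $\mathcal{R}_{\boldsymbol a}$, Proposition \ref{prop1} shows (as in the proof of Theorem \ref{teo}) that $\mathrm{ev}_J(X_1^{a})$ is orthogonal to $E^J_\Delta$ for every $a\in\mathcal{H}_J\setminus\Delta$; comparing dimensions gives $C^J_\Delta=E^J_{\mathcal{H}_J\setminus\Delta}$ with $\mathcal{H}_J\setminus\Delta=\{\delta,\delta+1,\dots,N-\delta-1\}$.

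For the minimum distances I would count zeros. When $J=\emptyset$, a nonzero codeword of $C^J_\Delta=E^J_{\{\delta,\dots,N-\delta-1\}}$ has the form $\mathrm{ev}_J(f)$ with $f=X_1^{\delta}g$, where $g\neq 0$ has degree at most $N-2\delta-1$; then $f$ vanishes at $0$ and at no more than $N-2\delta-1$ points of $\mu$, so its weight is at least $N-(1+N-2\delta-1)=2\delta$, and taking $g$ to be a product of $N-2\delta-1$ distinct elements of $\mu$ attains this value. Hence $C^J_\Delta$ has parameters $[N,N-2\delta,2\delta]_q$. When $J=\{1\}$, a nonzero codeword of $E^J_\Delta$ is $\mathrm{ev}_J(f)$; multiplying $f$ by the unit $X_1^{\delta-1}$ of $\mathcal{R}_J$ moves its support into $\{0,1,\dots,2\delta-2\}$, so $\mathrm{ev}_J(f)$ agrees, up to the nonzero scalars $\zeta^{\delta-1}$ ($\zeta\in\mu$), with the evaluation of a nonzero polynomial of degree at most $2\delta-2<N-1$; thus its weight is at least $(N-1)-(2\delta-2)=N-2\delta+1$. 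Together with the Singleton bound this forces equality, so $E^J_\Delta$ is MDS with parameters $[N-1,2\delta-1,N-2\delta+1]_q$, and its dual $C^J_\Delta$ is then MDS with parameters $[N-1,N-2\delta,2\delta]_q$.

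The step requiring the most care is the minimum-distance lower bound in the case $J=\{1\}$: there the exponent set defining $\Delta$ is not an interval but ``wraps around'' modulo $X_1^{N-1}-1$, so one cannot bound the number of zeros of a codeword by a polynomial degree directly; the remedy is multiplication by the unit $X_1^{\delta-1}$, which turns the support into the honest interval $\{0,\dots,2\delta-2\}$, and this is precisely where the bound $\delta\le N/2$ (guaranteeing $2\delta-2<N-1$) is needed, so that the resulting polynomial cannot vanish on all of $\mu$. Beyond this, the only point to watch is that the stated bounds on $\delta$ (which distinguish the parities of $N-1$) keep the reciprocal pairs $\{a,N-1-a\}$ non-degenerate and the two blocks of $\Delta$ disjoint, so that $\mathrm{card}(\Delta)$, and hence all dimension counts, come out exactly as claimed.
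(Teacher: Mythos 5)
Your proposal is correct and follows essentially the same route as the paper: identify $C^J_\Delta$ as the $J$-affine variety code on the complementary interval $\{\delta,\dots,N-\delta-1\}$, and reduce the distance bound to a degree count after multiplying by a suitable monomial. The only (harmless) variations are that you make the LCD verification via Theorem \ref{teo} explicit and dispatch the $J=\{1\}$ case through the Singleton bound and duality of MDS codes, whereas the paper treats that case as ``analogous''.
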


\begin{proof}
We prove the statement for the case when $J=\emptyset$. The proof is analogous when $J=\{1\}$. It is clear that $C^J_\Delta$ is the $J$-affine variety $E^J_{\Delta'}$ code given by $\Delta'= \{\delta, \delta+1, \ldots, \delta +(N-2 \delta -1)\}$. Now setting $\Delta''= \{1, 2, \ldots, N-2 \delta \}$, it holds that
$\{ \mathrm{ev}_J (X^{ {a}}) |  {a} \in \Delta'' \} = \{  \mathrm{ev}_J (X^{ {a}}) * \mathrm{ev}_J (X^{ N- \delta} ) : {a} \in \Delta'  \} $, where $*$ denotes the component-wise product. Since $\mathrm{wt}(\mathrm{ev}_J (X^{ N-\delta} )) = N-1$, both codes have the same parameters. So the dimension is clear and the distance follows from the fact that a polynomial of degree $N - 2\delta$ has at most $N - 2\delta$ zeroes.
\end{proof}

Now, for the general case and using Theorem \ref{teo}, we get a new family of LCD codes with a designed minimum distance. To prove it, we will need the following lemma which was stated in \cite[Proposition 4.1]{QINP2}.

\begin{lem}
\label{ddistancia}
Consider the ring $\mathcal{R}_J$ and fix a monomial ordering. Let $f(X_1,\ldots, X_m)$ be a polynomial of minimum total degree representing a class in $\mathcal{R}_J$ and let $X^{\boldsymbol{a}} = X_1^{a_1}   \cdots X_m^{a_m}$  be the leading monomial of $f$. Then
\[
\mathrm{card} \left\{ P \in Z_J \; | \; f(P) \neq 0 \right\} \geq \delta_{\boldsymbol{a}},
\]
where
\[
\delta_{\boldsymbol{a}} := \prod_{j=1}^m \left( N_j - \epsilon_j - a_j \right),
\]
$\epsilon_j$ being equal to $1$ if $j \in J$ and $\epsilon_j = 0$ otherwise.
\end{lem}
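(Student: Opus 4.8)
The plan is to deduce this from the footprint (Feng--Rao) bound applied to the ideal $I_J+(f)$. Observe first that the binomials $X_j^{N_j}-X_j$ ($j\notin J$) and $X_j^{N_j-1}-1$ ($j\in J$) generating $I_J$ form a Gr\"obner basis for the fixed monomial ordering: their leading monomials are the pure powers $X_j^{N_j}$, resp.\ $X_j^{N_j-1}$, which involve distinct variables and so are pairwise coprime, whence every $S$-polynomial reduces to zero (Buchberger's first criterion). Thus the initial ideal of $I_J$ is generated by these powers, the standard monomials of $I_J$ are exactly $\{X^{\boldsymbol b}\mid\boldsymbol b\in\mathcal H_J\}$, and $\mathrm{card}(Z_J)=\dim_{\mathbb F_q}\mathcal R_J=\mathrm{card}(\mathcal H_J)=n_J$ since $\mathrm{ev}_J$ is a bijection. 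I would take $f$ to be the normal form of its class modulo this Gr\"obner basis, so that it has minimum total degree and its leading monomial $X^{\boldsymbol a}$ satisfies $\boldsymbol a\in\mathcal H_J$.

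Next I would note that $\{P\in Z_J\mid f(P)=0\}$ is precisely the set of $\mathbb F_q$-rational zeros of $I_J+(f)$: the evaluation map factors through $\mathbb F_q[X_1,\ldots,X_m]/(I_J+(f))$, whose $\mathbb F_q$-dimension equals the number $\mathrm{card}\,\Delta_\prec(I_J+(f))$ of its standard monomials (the footprint), so
\[
\mathrm{card}\{P\in Z_J\mid f(P)=0\}\;\le\;\mathrm{card}\,\Delta_\prec\bigl(I_J+(f)\bigr).
\]
Since $I_J\subseteq I_J+(f)$ one has $\Delta_\prec(I_J+(f))\subseteq\{X^{\boldsymbol b}\mid\boldsymbol b\in\mathcal H_J\}$, and since $f\in I_J+(f)$ its leading monomial $X^{\boldsymbol a}$ lies in the initial ideal of $I_J+(f)$, so no multiple of $X^{\boldsymbol a}$ is a standard monomial of $I_J+(f)$. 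Hence $\Delta_\prec(I_J+(f))$ sits inside $\{X^{\boldsymbol b}\mid\boldsymbol b\in\mathcal H_J,\ \boldsymbol a\not\le\boldsymbol b\}$ for the componentwise order, and its complement $\{\boldsymbol b\in\mathcal H_J\mid a_j\le b_j\text{ for all }j\}$ has cardinality $\prod_{j\notin J}(N_j-a_j)\prod_{j\in J}(N_j-1-a_j)=\prod_{j=1}^m(N_j-\epsilon_j-a_j)=\delta_{\boldsymbol a}$, because $b_j$ ranges over $\{a_j,\ldots,N_j-1\}$ when $j\notin J$ and over $\{a_j,\ldots,N_j-2\}$ when $j\in J$. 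Combining, $\mathrm{card}\{P\in Z_J\mid f(P)=0\}\le n_J-\delta_{\boldsymbol a}$, and subtracting from $\mathrm{card}(Z_J)=n_J$ gives $\mathrm{card}\{P\in Z_J\mid f(P)\ne0\}\ge\delta_{\boldsymbol a}$.

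The step that needs the most care is the bookkeeping linking the leading monomial of $f$ to the standard monomials of $I_J$ (hence the passage to the normal form) together with the exact count of multiples of $X^{\boldsymbol a}$ inside $\mathcal H_J$, where the case $j\in J$ versus $j\notin J$ is what produces the factors $N_j-\epsilon_j-a_j$. A self-contained alternative avoiding Gr\"obner bases is induction on $m$: for $m=1$ a nonzero polynomial of degree $a_1$ has at most $a_1$ zeros among the $N_1-\epsilon_1$ points of $Z_J$, leaving at least $\delta_{(a_1)}$ nonzeros; for the inductive step one writes $f=\sum_i f_i(X_1,\ldots,X_{m-1})X_m^i$, uses the one-variable estimate in $X_m$ over the $N_m-\epsilon_m$ admissible values together with the induction hypothesis applied to the coefficient $f_i$ responsible for $\mathrm{lm}(f)$, and multiplies the two bounds; the fibrewise bookkeeping there is more delicate, which is why I would present the footprint argument.
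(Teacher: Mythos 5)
Your footprint-bound argument is correct, and since the paper omits the proof entirely (it simply cites \cite[Proposition 4.1]{QINP2}), you have supplied the standard argument that the cited source itself uses: pairwise-coprime leading terms make the generators of $I_J$ a Gr\"obner basis, the zeros of $f$ on $Z_J$ are bounded by the footprint of $I_J+(f)$, and the multiples of $X^{\boldsymbol a}$ inside $\mathcal{H}_J$ number exactly $\prod_{j=1}^m(N_j-\epsilon_j-a_j)$. Your passage to the normal form is harmless, since reduction by these generators does not increase total degree and a minimum-degree representative whose leading exponent falls outside $\mathcal{H}_J$ makes some factor $N_j-\epsilon_j-a_j$ nonpositive, rendering the claimed bound vacuous in that case.
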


\begin{pro}\label{pro:eje}	
Keep the notation as at the beginning of this section setting $N_j>1$, $j=1, 2, \ldots,m$, such that $N_j-1$ divides $q-1$. Let $J=\{1,2, \ldots, m\}$ and fix $\alpha_j < T_j /2 $ if $T_j$ is even and $\alpha_j \le (T_j-1) /2 $ otherwise. Consider the subset of $\mathcal{H}_J$, $\Delta = L_1 \times L_2 \times \cdots \times L_m$ where $L_j = \{T_j/2 -\alpha_j, \ldots, T_j/2, \ldots, T_j/2 +\alpha_j \}$ if $T_j$ is even and $L_j = \{(T_j-1)/2 -\alpha_j, \ldots, (T_j-1)/2 +\alpha_j \}$ otherwise.

Then, writing $A_j = 2\alpha_j +1$, the code $C_\Delta^J$ is an LCD code with parameters
$$\left[n_J ,n_J- \prod_{j=1}^m A_j, \ge \min_{j\in J}\{ A_j+1\} \right]_q.$$
\end{pro}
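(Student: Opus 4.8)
The plan is to pin down the three parameters of $C^J_\Delta$ one at a time. The dimension is immediate: $\mathrm{ev}_J$ is injective, so $\dim E^J_\Delta=\mathrm{card}(\Delta)=\prod_{j=1}^m\mathrm{card}(L_j)=\prod_{j=1}^m A_j$, and therefore $\dim C^J_\Delta=n_J-\prod_{j=1}^m A_j$.

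For the LCD property I would invoke Theorem \ref{teo}. Because $J=\{1,\dots,m\}$, the reciprocity class $\mathcal{R}_{\boldsymbol{a}}$ of a point $\boldsymbol{a}\in\mathcal{H}_J$ is the orbit of $\boldsymbol{a}$ under the coordinatewise involution $\rho$ whose $j$-th component $\rho_j$ fixes $0$ and exchanges $a$ with $N_j-1-a$. Since $\Delta=L_1\times\cdots\times L_m$ is a Cartesian product and $\rho$ acts coordinatewise, $\Delta$ is a union of reciprocity classes exactly when each $L_j$ is $\rho_j$-stable, and this is what the definition of $L_j$ arranges: each $L_j$ is an interval of $A_j$ consecutive integers symmetric under $\rho_j$ — when $T_j$ is odd it is centered at the unique fixed point $(N_j-1)/2$ of $\rho_j$ on $\{1,\dots,N_j-2\}$ — and the bound $\alpha_j\le(T_j-1)/2$ keeps $L_j$ inside $\{1,\dots,N_j-2\}$, so $\rho_j(L_j)=L_j$. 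By Theorem \ref{teo}, $E^J_\Delta$, hence its dual $C^J_\Delta$, is LCD, and moreover $C^J_\Delta=E^J_{\mathcal{H}_J\setminus\Delta}$.

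The minimum distance is where Lemma \ref{ddistancia} enters, after a weight-preserving change of representative. A nonzero codeword of $C^J_\Delta=E^J_{\mathcal{H}_J\setminus\Delta}$ equals $\mathrm{ev}_J(f)$ for the polynomial $f$ supported on $\mathcal{H}_J\setminus\Delta$ that represents it, which is then a representative of minimal total degree of its class. The decisive step is to translate this support off the \emph{top} box $B^{\mathrm{top}}:=\prod_{j=1}^m\{T_j-A_j+1,\dots,T_j\}$: take the monomial $X^{\boldsymbol{k}}$ with $k_j\equiv N_j-1-A_j-\min L_j\pmod{N_j-1}$. Since $j\in J$ forces $X_j^{N_j-1}=1$, the element $X^{\boldsymbol{k}}$ is a unit in $\mathcal{R}_J$ and $\mathrm{ev}_J(X^{\boldsymbol{k}})$ has no zero coordinate, so $g:=X^{\boldsymbol{k}}f$ (reduced) satisfies $\mathrm{wt}(\mathrm{ev}_J(g))=\mathrm{wt}(\mathrm{ev}_J(f))$; and because translation by $\boldsymbol{k}$ carries $\prod_j L_j$ onto $B^{\mathrm{top}}$, one gets $\mathrm{supp}(g)\subseteq\mathcal{H}_J\setminus B^{\mathrm{top}}$. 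Hence every monomial of $g$ — in particular its leading monomial $X^{\boldsymbol{a}}$ for any fixed monomial order — has some exponent $a_{j_0}\le T_{j_0}-A_{j_0}$, so that
\[
\delta_{\boldsymbol{a}}=\prod_{j=1}^m\bigl(N_j-1-a_j\bigr)\;\ge\;N_{j_0}-1-a_{j_0}\;\ge\;A_{j_0}+1\;\ge\;\min_{j\in J}(A_j+1),
\]
and Lemma \ref{ddistancia} yields $\mathrm{wt}(\mathrm{ev}_J(f))=\mathrm{wt}(\mathrm{ev}_J(g))\ge\delta_{\boldsymbol{a}}\ge\min_{j\in J}(A_j+1)$.

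The dimension count and the combinatorics of the second step are routine; the one load-bearing idea is the choice of translation in the third step — pushing the forbidden box to the top of $\mathcal{H}_J$ forces the leading monomial of the translated codeword to have a small exponent in at least one variable, which makes its footprint $\delta_{\boldsymbol{a}}$ automatically at least $\min_j(A_j+1)$. (One could also avoid Lemma \ref{ddistancia} entirely: $E^J_\Delta$ is the tensor product of the univariate cyclic codes $E^{(j)}_{L_j}$, so $C^J_\Delta$ is the dual of a tensor product, whose minimum distance equals $\min_j d\bigl((E^{(j)}_{L_j})^{\perp}\bigr)$, and each $(E^{(j)}_{L_j})^{\perp}$ is a cyclic code whose defining set contains the $A_j$ consecutive exponents of $L_j$, hence has minimum distance at least $A_j+1$ by the BCH bound.)
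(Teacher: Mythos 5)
Your route is essentially the paper's: Theorem \ref{teo} for the LCD property, a monomial translation to reposition the box $\Delta$, and Lemma \ref{ddistancia} for the designed distance (the paper multiplies by $\mathrm{ev}_J(1/\prod_j X_j^{\beta_j})$ to put the box at the origin and then dualizes; you translate the complement so that $\Delta$ lands on the top box --- the same idea run in the opposite direction). The dimension count, the weight-preservation under multiplication by a unit monomial, and the footprint estimate $\delta_{\boldsymbol{a}}\ge A_{j_0}+1$ are all correct. The tensor-product/BCH aside is a genuinely different and more robust way to get the distance bound, since it does not depend on identifying $C^J_\Delta$ with a monomially generated code.

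The one genuine problem is the load-bearing claim $\rho_j(L_j)=L_j$, which you need both for the LCD conclusion and for the identification $C^J_\Delta=E^J_{\mathcal{H}_J\setminus\Delta}$ (without stability one only gets $C^J_\Delta=E^J_{\mathcal{H}_J\setminus\rho(\Delta)}$). On $\{0,\dots,T_j\}=\{0,\dots,N_j-2\}$ the reciprocity involution is $a\mapsto N_j-1-a$ for $a>0$ and $0\mapsto 0$; its unique nonzero fixed point, when it exists, is $(N_j-1)/2=(T_j+1)/2$, which exists only for $T_j$ odd and is \emph{not} the center $(T_j-1)/2$ prescribed in the statement, so your sentence ``centered at the unique fixed point $(N_j-1)/2$'' does not describe the $L_j$ you were given. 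Worse, when $T_j$ is even the involution is fixed-point-free on $\{1,\dots,T_j\}$, so no interval there of odd cardinality $A_j$ can be stable at all: for $q=N_1=8$, $m=1$, $\alpha_1=0$ one gets $\Delta=\{3\}$ and $\mathrm{ev}_J(X^3)\cdot\mathrm{ev}_J(X^3)=\sum_{x\in\mathbb{F}_8^*}x^6=0$, so $\mathrm{ev}_J(X^3)$ lies in the hull and $E^J_\Delta$ is not LCD. The paper's own proof simply cites Theorem \ref{teo} without verifying its hypothesis; your instinct to verify it is right, but the verification only succeeds after re-centering $L_j$ at $(N_j-1)/2$ (hence only for $T_j$ odd) or after letting $L_j$ wrap around $0$ as in Proposition \ref{pro:uni}. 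You should state that correction explicitly, or restrict to the cases where stability actually holds; with that fixed, the rest of your argument goes through.
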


\begin{proof}
Theorem \ref{teo} proves that $C_\Delta^J$ is LCD. Moreover, multiplying each generator of $E^J_\Delta$ by $\mathrm{ev}_J(1/\prod_{j \in J}X_j^{\beta_j})$ for suitable powers $\beta_j$, one obtains a monomially equivalent code (see \cite[Definition 3.1]{lit}) $E_{\Delta'}^J$ where the bottom left corner of the box $\Delta'$ is $\mathbf{0}$. The codes $E_{\Delta}^J$ and $E_{\Delta'}^J$ have the same dimension and distance and the same weight enumerators (see again \cite{lit}). Proposition \ref{prop1} shows that the dual code $C_{\Delta'}^J$
has the same minimum distance as the code $E_{\Delta^{''}}^J$, where
$$\Delta^{''} = \{0, \ldots, T_1 \} \times \{0, \ldots, T_2 \} \cdots \times \{0, \ldots, T_m  \}\setminus $$ $$\{0, T_1, T_1-1, \ldots,T_1- A_1 +1 \} \times \cdots \times \{0, T_m, T_m-1, \ldots,T_m- A_1 +1  \}.$$
Then, the result follows after applying Lemma \ref{ddistancia}. Notice that when $N_j -1= q -1$, $E_{\Delta^{''}}^J$  is a toric code and the result holds by \cite[Theorem 3]{Little} or \cite[Example 5.1]{Toric}.
\end{proof}

Now, and up to the end of this section, for providing a unified treatment according to the different sets $J$, we make a shift for the exponent of the monomials defining our code. Such a set is
$$\overline{\mathcal{H}}_J = \{\epsilon_1, \epsilon_1 +1, \ldots, \epsilon_1 + T_1 \} \times \{\epsilon_2, \epsilon_2 +1, \ldots, \epsilon_2 + T_2\} \times \cdots \times \{\epsilon_m, \epsilon_m +1, \ldots, \epsilon_m + T_m\}. $$
Identifying $T_j + \epsilon_j $ with $0$, for $j \in J$, we obtain a bijection from $\overline{\mathcal{H}}_J$ to $\mathcal{H}_J$. Note that $\overline{\mathcal{H}}_J$ and $\mathcal{H}_J$ are two different sets of exponents satisfying that the classes of the corresponding monomials in $\mathcal{R}_J$ are the same. Then, we consider the following set of monomials in $R$
\[
N(J,t) = \left\{  X^{\boldsymbol{b}} \; | \; \epsilon_j \leq b_j \leq N_j -1, \; 1 \leq j \leq m, \; \mathrm{and} \; \prod_{j=1}^m \left(b_j +1 - \epsilon_j \right) < t \right\},
\]
where $\epsilon_j =1$ if $j \in J$ and it equals zero otherwise. The hyperbolic code $\mathrm{Hyp}(J,t)$  \cite{SaintsHeegard,olav} can be defined as the (Euclidean) dual of the code given by the vector subspace of $\mathbb{F}_q^{n_J}$ generated by the evaluation by $\mathrm{ev}_J$  of the classes in $\mathcal{R}_J$ of the monomials in $N(J,t)$. By \cite[Proposition 4.3]{QINP2},
the minimum distance of $\mathrm{Hyp}(J,t)$ is larger than $t-1$. With the help of that code, we state
the following result which will be useful.

\begin{pro}
\label{hiper1}
With the notation as in the above paragraph and at the beginning of this section, set $N_j>1$, for $j=1, 2, \ldots,m$, such that $N_j-1$ divides $q-1$. Fix a positive integer such that $t \leq n_J =  \prod_{j \notin J} N_j \prod_{j \in J} (N_j -1)$, assume that $p|N_j$ for all $j \not \in J$ and consider the set $\Delta (J,t) = N(J,t) \cup N(J,t)^r$, where $N(J,t)^r$ is the set  of reciprocal elements (defined as in Theorem \ref{teo} or in Remark \ref{la15}) of those in $N(J,t)$, where we notice that for $j \in J$, $N_j -1$ must be identified with $0$. Then, the (Euclidean) dual $C_{\Delta (J,t)}^J$ of the $J$-affine variety code $E_{\Delta (J,t)}^J$ is a $J$-affine LCD code with parameters $[n_J, n_J- \mathrm{card} \left(\Delta (J,t)\right), \geq t]_q$.
\end{pro}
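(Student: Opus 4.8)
The plan is to mimic the proof of Proposition \ref{pro:eje}, replacing the box-shaped index set by the staircase set $N(J,t)$ and its reciprocals, and then to read off the distance from the hyperbolic code bound instead of from Lemma \ref{ddistancia} directly. First I would invoke Theorem \ref{teo} (together with Remark \ref{la15} in the case $J\neq\{1,\ldots,m\}$, which is exactly the situation where $p\mid N_j$ for $j\notin J$ is being assumed): since $\Delta(J,t)=N(J,t)\cup N(J,t)^r$ is by construction a union of sets $\mathcal{R}_{\boldsymbol{a}}$ (or $\mathcal{R'}_{\boldsymbol{a}}$), the code $E^J_{\Delta(J,t)}$ is LCD, and its dual $C^J_{\Delta(J,t)}$ is again a $J$-affine variety code, being generated by the evaluations of the monomials with exponents in $\overline{\mathcal{H}}_J\setminus\Delta(J,t)$. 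This also gives the dimension: $n_J-\mathrm{card}(\Delta(J,t))$.

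Next I would pin down the minimum distance. Working with the shifted exponent set $\overline{\mathcal{H}}_J$ as set up in the paragraph preceding the statement, and, if necessary, passing first to a monomially equivalent code by multiplying generators by $\mathrm{ev}_J(1/\prod_{j\in J}X_j^{\beta_j})$ as in the proof of Proposition \ref{pro:eje} so that the relevant corner sits at $\mathbf{0}$, I would use Proposition \ref{prop1} to identify the reciprocal of a monomial $X^{\boldsymbol b}$: for $j\in J$ the reciprocal exponent is $N_j-1-b_j$ (with $N_j-1\equiv 0$), and for $j\notin J$ it is $N_j-1-b_j$ when $0<b_j<N_j-1$ and $\{0,N_j-1\}$ otherwise. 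Under the standing hypothesis $p\mid N_j$ for $j\notin J$, Proposition \ref{prop1} tells us the evaluation of a monomial whose $j$-th exponent is $0$ for some $j\notin J$ is self-orthogonal in that coordinate only through the reciprocal, so $N(J,t)^r$ is exactly the set of reciprocal monomials. Then, just as in Proposition \ref{pro:eje}, the dual $C^J_{\Delta(J,t)}$ has the same minimum distance as the $J$-affine variety code associated with the complement $\overline{\mathcal{H}}_J\setminus\Delta(J,t)$; because $N(J,t)$ is already a ``decreasing'' staircase set (if $X^{\boldsymbol b}\in N(J,t)$ and $b'_j\le b_j$ componentwise then $X^{\boldsymbol b'}\in N(J,t)$ after shifting), its reciprocal set $N(J,t)^r$ is the ``increasing'' staircase $\{X^{\boldsymbol c}:\prod_j(N_j-\epsilon_j-c_j)<t\}$ in the original (unshifted) coordinates. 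Hence the complement of $\Delta(J,t)$ avoids precisely this latter staircase, so the associated code is exactly $\mathrm{Hyp}(J,t)$ up to the monomial equivalence, and by \cite[Proposition 4.3]{QINP2} its minimum distance is $\ge t$.

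The main obstacle I expect is the bookkeeping around the two different exponent sets $\mathcal{H}_J$ and $\overline{\mathcal{H}}_J$ and the identification of $N(J,t)^r$: one must check carefully that ``reciprocal of the decreasing staircase defined by $\prod(b_j+1-\epsilon_j)<t$'' is literally the increasing staircase defined by $\prod(N_j-\epsilon_j-c_j)<t$, and that this is the staircase whose complement-dual is the hyperbolic code of the same parameter $t$. The substitution $c_j\mapsto N_j-1-b_j$ sends $b_j+1-\epsilon_j = (N_j-\epsilon_j) - (N_j-1-b_j) = (N_j-\epsilon_j)-c_j$ when $j\in J$, and one has to handle the boundary values $b_j\in\{0,N_j-1\}$ for $j\notin J$ separately, using $p\mid N_j$ and Remark \ref{la15}; this is the delicate point, and it is precisely why the hypothesis $p\mid N_j$ for $j\notin J$ appears. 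Once this identification is in place, the distance bound is immediate from \cite[Proposition 4.3]{QINP2}, and the LCD property and the dimension are immediate from Theorem \ref{teo} and Remark \ref{la15}, completing the proof.
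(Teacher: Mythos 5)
Your treatment of the LCD property and the dimension is fine and matches the paper: $\Delta(J,t)$ is closed under taking reciprocals, so Theorem \ref{teo} (and Remark \ref{la15} when $J\neq\{1,\ldots,m\}$) gives that $E^J_{\Delta(J,t)}$ is LCD with $J$-affine dual, and the dimension $n_J-\mathrm{card}(\Delta(J,t))$ is automatic. The problem is in your distance argument. You claim that the code generated by the complement $\overline{\mathcal{H}}_J\setminus\Delta(J,t)$ ``is exactly $\mathrm{Hyp}(J,t)$ up to the monomial equivalence.'' This is false in general: $\dim C^J_{\Delta(J,t)}=n_J-\mathrm{card}\bigl(N(J,t)\cup N(J,t)^r\bigr)$, while $\dim \mathrm{Hyp}(J,t)=n_J-\mathrm{card}\bigl(N(J,t)\bigr)$, and these differ whenever $N(J,t)^r\not\subseteq N(J,t)$, which is the generic situation (it is exactly why $\Delta(J,t)$ has to be enlarged in the first place). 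Moreover, the reciprocal-staircase identification you base this on breaks at the boundary values $b_j\in\{0,N_j-1\}$ for $j\notin J$, as you yourself flag without resolving; so the step you call ``the delicate point'' is a real gap, not just bookkeeping.

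The repair is much simpler than what you attempt, and it is the paper's entire argument: since $N(J,t)\subseteq\Delta(J,t)$, one has $E^J_{N(J,t)}\subseteq E^J_{\Delta(J,t)}$, hence $C^J_{\Delta(J,t)}\subseteq \bigl(E^J_{N(J,t)}\bigr)^{\perp}=\mathrm{Hyp}(J,t)$. A subcode can only have larger or equal minimum distance, and $\mathrm{Hyp}(J,t)$ has minimum distance larger than $t-1$ by \cite[Proposition 4.3]{QINP2}, so $d\bigl(C^J_{\Delta(J,t)}\bigr)\geq t$. No monomial equivalence, no shift to $\overline{\mathcal{H}}_J$, and no identification of $N(J,t)^r$ with another staircase is needed for the distance bound; the containment alone suffices.
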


\begin{proof}
The construction of the code containing elements and reciprocal proves that we obtain an LCD code. The bound on the distance is also clear because we consider a code contained in the code $\mathrm{Hyp}(J,t)$ whose distance is larger than $t-1$.
\end{proof}

We are not directly interested in the LCD codes given by the above results because of
the recent papers \cite{Carlet2, Pellikaan} that show the existence of LCD codes for $q >3$ as good as linear codes. We will use them for obtaining suitable subfield-subcodes which will give rise to good binary and ternary LCD codes.

\section{LCD subfield-subcodes of $J$-affine variety codes}
\label{secdos}

Keep the notation as in Section \ref{secuno}. For $j \in J$, let $\mathbb{Z}_{\mathcal{T}_j} = \mathbb{Z}/\langle N_j -1  \rangle$ where we represent its classes by $\{0,1, \ldots, T_j\}$. For $j \not\in J$, we represent the classes of $\mathbb{Z}/\langle N_j -1  \rangle$ by $\{1, 2,  \ldots, T_j\}$ and define $\mathbb{Z}_{\mathcal{T}_j} = \{0\} \cup \mathbb{Z}/\langle N_j -1  \rangle $, where we represent its classes by $\{0,1, \ldots, T_j\}$. A subset $\mathfrak{I}$ of the Cartesian product $\mathbb{Z}_{\mathcal{T}_1}\times \mathbb{Z}_{\mathcal{T}_2} \times \cdots\times\mathbb{Z}_{\mathcal{T}_m}$ is called a {\it cyclotomic set}  with respect to $p$ if $p \cdot \boldsymbol{x} \in \mathfrak{I}$ for any $\boldsymbol{x} = (x_1, x_2, \ldots, x_m) \in \mathfrak{I}$, where $p \cdot \boldsymbol{x} = (p  x_1, p  x_2, \ldots, p x_m)$. $\mathfrak{I}$ is said to be {\it minimal} (with respect to $p$) whenever it contains all the elements that can be expressed as  $p^{ i } \cdot \boldsymbol{x}$ for some fixed element $\boldsymbol{x} \in \mathfrak{I}$ and some nonnegative integer $i$.  Within each minimal cyclotomic set $\mathfrak{I}$, we pick a representative $\boldsymbol{a} = (a_1, a_2, \ldots, a_m)$ given by nonnegative integers such that $a_1$ is the minimum of the first coordinates of the nonnegative representatives of the elements in $\mathfrak{I}$, $a_2$ is the minimum of the second coordinates of those elements in $\mathfrak{I}$ having $a_1$ as a first coordinate and the remaining coordinates, $a_3, \ldots, a_m$ are defined in the same way. We will denote by $\mathfrak{I}_{\boldsymbol{a}}$ the cyclotomic set $\mathfrak{I}$ with representative $\boldsymbol{a}$ and by $\mathcal{A}$ the set of representatives of the minimal cyclotomic sets. Thus, the set of minimal cyclotomic sets will be $\{ \mathfrak{I}_{\boldsymbol{a}}\}_{\boldsymbol{a} \in \mathcal{A}}$. In addition, we will denote $i_{\boldsymbol{a}} : = \mathrm{card}(\mathfrak{I}_{\boldsymbol{a}})$. Note that one can consider the cyclotomic sets with respect to an intermediate power $p^s$, such that $s$ divides $r$, however, since we only want to consider the case when $p$ equals 2 and 3, we set $s=1$.

Consider $\boldsymbol{a}$ and let $\boldsymbol{b}$ be a reciprocal of $\boldsymbol{a}$. Abusing the notation, let $\mathfrak{I}_{\boldsymbol{b}}$ be the cyclotomic set that contains $\boldsymbol{b}$. Taking into account the ring structure behind the two different sets $\mathbb{Z}_{\mathcal{T}_j}$, one gets the following straightforward result.

\begin{lem}\label{lem:coset}
Let $\boldsymbol{a} \in \mathcal{A}$ and let $\boldsymbol{b}$ be a reciprocal element. Then for every element in $\mathfrak{I}_{\boldsymbol{a}}$ there is a unique reciprocal element in $\mathfrak{I}_{\boldsymbol{b}}$ and both cyclotomic sets have the same cardinality. In addition, if $\boldsymbol{a}$ is asymmetric, then $\mathfrak{I}_{\boldsymbol{a}} \cap \mathfrak{I}_{\boldsymbol{b}} = \emptyset$.
\end{lem}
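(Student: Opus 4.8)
The plan is to produce, from a fixed choice of reciprocal $\boldsymbol{b}$ of $\boldsymbol{a}$, an explicit bijection $\phi\colon\mathfrak{I}_{\boldsymbol{a}}\to\mathfrak{I}_{\boldsymbol{b}}$ that pairs each point with one of its reciprocals, and then to read the three assertions off the structure of $\phi$. The point is that, coordinate by coordinate, passing to a reciprocal amounts to additive inversion in $\mathbb{Z}/\langle N_j-1\rangle$, the only exception being that for $j\notin J$ with $a_j\in\{0,N_j-1\}$ one has a binary choice between the two distinguished elements $0$ and $N_j-1$ of $\mathbb{Z}_{\mathcal{T}_j}$ (and, for $j\in J$, the harmless ambiguity mentioned in Theorem \ref{teo}); this choice is frozen once $\boldsymbol{b}$ is fixed. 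Since $N_j-1$ divides $q-1=p^r-1$, the residue of $p$ is a unit modulo $N_j-1$, so multiplication by $p$ is a bijection of $\mathbb{Z}_{\mathcal{T}_j}$ which commutes with additive inversion and fixes each of $0$ and $N_j-1$; this is the only ring-theoretic input.

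First I would put $\phi(p^{i}\cdot\boldsymbol{a}):=p^{i}\cdot\boldsymbol{b}$ for all $i\ge 0$ and check that it is well defined: if $p^{i}\cdot\boldsymbol{a}=p^{i'}\cdot\boldsymbol{a}$, then in every coordinate where $b_j$ is honestly $-a_j$ the equality $p^{i}b_j=p^{i'}b_j$ is immediate from the unit property, while in the remaining coordinates $a_j$ and $b_j$ lie in $\{0,N_j-1\}$ (or equal $0$), hence are fixed by all powers of $p$, so both sides do not depend on the exponent. The same coordinatewise bookkeeping shows that $\phi$ is injective and that $\phi(\boldsymbol{x})$ is in fact the unique reciprocal of $\boldsymbol{x}$ lying in $\mathfrak{I}_{\boldsymbol{b}}$: two reciprocals of a point differ only in the coordinates where a choice is available, and every point of the orbit $\mathfrak{I}_{\boldsymbol{b}}$ has, in those coordinates, the same values as $\boldsymbol{b}$. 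Running the identical construction with $\boldsymbol{a}$ and $\boldsymbol{b}$ interchanged — legitimate because the reciprocity relation of Theorem \ref{teo} is symmetric — yields a two-sided inverse whose image is $\mathfrak{I}_{\boldsymbol{b}}$. This gives the first assertion, and taking cardinalities gives $\mathrm{card}(\mathfrak{I}_{\boldsymbol{a}})=\mathrm{card}(\mathfrak{I}_{\boldsymbol{b}})$.

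For the last assertion I would use that the minimal cyclotomic sets are precisely the orbits of the cyclic group generated by multiplication by $p$, so any two of them are either disjoint or equal; hence $\mathfrak{I}_{\boldsymbol{a}}\cap\mathfrak{I}_{\boldsymbol{b}}\neq\emptyset$ would force $\mathfrak{I}_{\boldsymbol{a}}=\mathfrak{I}_{\boldsymbol{b}}$, i.e.\ $\boldsymbol{b}=p^{s}\cdot\boldsymbol{a}$ for some $s\ge 0$. The main obstacle is the remaining implication: one has to show that $\boldsymbol{b}=p^{s}\cdot\boldsymbol{a}$ forces $\mathrm{card}(\mathcal{R}_{\boldsymbol{a}})=1$, i.e.\ that an asymmetric $\boldsymbol{a}$ cannot sit in the same orbit as its reciprocal. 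I would attack this by confronting, coordinate by coordinate, the identity $b_j=p^{s}a_j$ with the reciprocal description of $b_j$ (namely $b_j\equiv-a_j$, respectively $b_j\in\{0,N_j-1\}$ in the coordinates where a choice is available), and by using the minimality of the chosen representative $\boldsymbol{a}$ of its orbit; this case analysis, which splits on whether $j\in J$ and on the boundary values $0$ and $N_j-1$, is the only part of the proof that is not purely formal and is where the hypotheses on the parameters $N_j$ enter.
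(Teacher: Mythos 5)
The paper offers no proof of this lemma at all -- it is presented as a ``straightforward result'' following from the ring structure of the sets $\mathbb{Z}_{\mathcal{T}_j}$ -- and your explicit bijection $p^{i}\cdot\boldsymbol{a}\mapsto p^{i}\cdot\boldsymbol{b}$ is precisely the argument being alluded to. The well-definedness, injectivity and reciprocity checks all reduce, as you say, to the facts that $p$ is a unit modulo $N_j-1$, that multiplication by $p$ commutes with additive inversion, and that the distinguished boundary values $0$ and $N_j-1$ are fixed by $p$. That part of your write-up is correct and supplies details the paper omits. (The ``unique reciprocal in $\mathfrak{I}_{\boldsymbol{b}}$'' clause can literally fail in degenerate boundary cases because of the extra ambiguity allowed for $j\in J$ in Theorem \ref{teo} and in Remark \ref{la15}; that is an imprecision in the paper's notion of reciprocal rather than a defect of your argument.)

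The genuine gap is in the last step. The implication you announce as the remaining obstacle -- that $\boldsymbol{b}=p^{s}\cdot\boldsymbol{a}$ forces $\mathrm{card}(\mathcal{R}_{\boldsymbol{a}})=1$ -- is false, so no amount of coordinatewise case analysis will establish it. Take $m=1$, $J=\{1\}$, $p=2$, $q=4$, $N=4$ and $a=1$: then $\mathcal{R}_{1}=\{1,2\}$ has two elements, so the point $1$ is asymmetric in the sense of Theorem \ref{teo}, yet $\mathfrak{I}_{1}=\{1,2\}$ already contains the reciprocal $2=p\cdot 1$, i.e. $\mathfrak{I}_{1}=\mathfrak{I}_{2}$. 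More generally, the paper's own Lemma \ref{lem:sere} shows that whenever $p^{j}+1$ divides $N-1$ for some $1\le j\le r-1$, the point $a=(N-1)/(p^{j}+1)$ is asymmetric as a point while $\mathfrak{I}_{a}$ contains its reciprocal $N-1-a=p^{j}a$. The resolution is that ``asymmetric'' in the final sentence of the lemma must be read in the sense defined immediately \emph{after} the lemma -- the cyclotomic set $\mathfrak{I}_{\boldsymbol{a}}$ is asymmetric, meaning $\mathfrak{I}_{\boldsymbol{a}}\neq\mathfrak{I}_{\boldsymbol{b}}$ -- and not in the pointwise sense of Theorem \ref{teo}. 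Under that reading the disjointness is exactly the ``orbits of multiplication by $p$ are equal or disjoint'' observation you already made, and the step you single out as the main difficulty should simply be deleted rather than attempted.
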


With the above notation, we say that a cyclotomic set $\mathfrak{I}_{\boldsymbol{a}}$ is {\it symmetric} if $\mathfrak{I}_{\boldsymbol{a}}=\mathfrak{I}_{\boldsymbol{b}}$ for all reciprocal element $\boldsymbol{b}$. Otherwise we will say that it is asymmetric. In addition, we define a partition of $\mathcal{A}$ as follows $\mathcal{A} =  \mathcal{A}_1 \cup \mathcal{A}_2$ ($\mathcal{A}_1 \cap \mathcal{A}_2 = \emptyset$), where $\mathcal{A}_1$ consists of the representatives of the symmetric cyclotomic sets and, for the asymmetric sets $\mathfrak{I}_{\boldsymbol{a}} \neq \mathfrak{I}_{\boldsymbol{a}'}$, where $\boldsymbol{a}$ and $\boldsymbol{a}'$ are reciprocal elements, we consider  $\boldsymbol{a}$ in $\mathcal{A}_1$ if  $\boldsymbol{a} <  \boldsymbol{a}'$ for the lexicographical ordering.

The {\it subfield-subcode} of a $J$-affine variety code $E_\Delta^J$ over $\mathbb{F}_{q}=\mathbb{F}_{p^r}$ is defined as $E_\Delta^{J,\sigma} := E_\Delta \cap \mathbb{F}_p^{n_J}$. Consider the following maps $\mathrm{tr}: \mathbb{F}_q \rightarrow \mathbb{F}_p$, $\mathrm{tr}(x) = x + x^p + \cdots + x^{p^{r-1}}$;  $\mathbf{tr}: \mathbb{F}_q^{n_J} \rightarrow \mathbb{F}_p^{n_J}$ given componentwise by $\mathrm{tr}(x)$, and $\mathcal{T}: R_J \rightarrow R_J$ defined by $\mathcal{T} (f) = f + f^p + \cdots + f^{p^{r-1}}$. We say that a class $f \in R_J$ evaluates to $\mathbb{F}_p$ whenever $f(\boldsymbol{a}) \in \mathbb{F}_p$ for all $\boldsymbol{a} \in Z_J$. In \cite[Proposition 5]{galindo-hernando} it is proved that $f$ evaluates to $\mathbb{F}_p$ if and only if $f= \mathcal{T}(g)$ for some $g \in R_J$. Now, considering for each $\boldsymbol{a} \in \mathcal{A}$, the close to $\mathcal{T}$ map,  $\mathcal{T}_{\boldsymbol{a} }: R_J \rightarrow R_J$, $\mathcal{T}_{\boldsymbol{a}} (f) = f + f^p + \cdots + f^{p^{(i_{\boldsymbol{a}} -1)}}$, we get the following result about the dimension of the code $E_\Delta^{J,\sigma} $. The proof is analogous to that in \cite[Theorem 4]{galindo-hernando}.

\begin{teo}
\label{ddimension}
Let $\Delta$ be a subset of $\mathcal{H}_J$ and set $\xi_{\boldsymbol{a}}$ a primitive element of the field $\mathbb{F}_{p^{i_{\boldsymbol{a}}}}$. Then, a basis of the vector space $E_\Delta^{J,\sigma}$ is given by the images under the map $\mathrm{ev}_J$ of the set of classes in $R_J$
$$\bigcup_{ \boldsymbol{a} \in \mathcal{A}| \mathfrak{I}_{\boldsymbol{a}} \subseteq \Delta
} \left\{ \mathcal{T}_{\boldsymbol{a}} (\xi_{\boldsymbol{a}}^{s} X^{\boldsymbol{a}}) | 0 \leq s \leq i_{\boldsymbol{a}} -1 \right\}.$$
\end{teo}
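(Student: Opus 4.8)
The plan is to reduce the statement to the known univariate/scalar-case result of \cite[Theorem 4]{galindo-hernando} by exploiting the compatibility of the evaluation map $\mathrm{ev}_J$ with the trace maps $\mathcal{T}$ and $\mathcal{T}_{\boldsymbol{a}}$, and the decomposition of $R_J$ into the ``monomial spaces'' attached to minimal cyclotomic sets. First I would recall from \cite[Proposition 5]{galindo-hernando} that a class $f\in R_J$ evaluates into $\mathbb{F}_p$ if and only if $f=\mathcal{T}(g)$ for some $g\in R_J$; hence $E_\Delta^{J,\sigma}=E_\Delta^J\cap\mathbb{F}_p^{n_J}$ is exactly the image under $\mathrm{ev}_J$ of the $\mathbb{F}_p$-vector space $V_\Delta := \{\mathcal{T}(g)\mid g\in R_J,\ \mathcal{T}(g)\in \langle X^{\boldsymbol{a}}\mid \boldsymbol{a}\in\Delta\rangle_{\mathbb{F}_q}\}$, where I use that $\mathrm{ev}_J$ is a bijection on $R_J$ so that membership in $E_\Delta^J$ corresponds to the span condition on polynomial representatives.

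Next I would analyse $V_\Delta$ by grouping the monomial basis of $R_J$ according to minimal cyclotomic sets. For a single representative $\boldsymbol{a}\in\mathcal{A}$, the $p$-power (Frobenius) action permutes the monomials $\{X^{p^i\boldsymbol{a}}\mid 0\le i\le i_{\boldsymbol{a}}-1\}$ cyclically (after the identifications defining $\mathbb{Z}_{\mathcal{T}_j}$), and for any $\lambda\in\mathbb{F}_q$ one has $\mathcal{T}(\lambda X^{\boldsymbol{a}})=\sum_{i=0}^{r-1}\lambda^{p^i}X^{p^i\boldsymbol{a}}$; since the exponents repeat with period $i_{\boldsymbol{a}}$ (which divides $r$), this collapses to $(r/i_{\boldsymbol{a}})\,\mathcal{T}_{\boldsymbol{a}}(\lambda' X^{\boldsymbol{a}})$ for a suitable $\lambda'$ — so up to the harmless factor $r/i_{\boldsymbol{a}}$, the relevant traces are the $\mathcal{T}_{\boldsymbol{a}}$. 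A standard normal-basis / linear-algebra argument (the map $\mathbb{F}_{p^{i_{\boldsymbol{a}}}}\to (\text{span of the cyclotomic orbit})$, $\lambda\mapsto\mathcal{T}_{\boldsymbol{a}}(\lambda X^{\boldsymbol{a}})$ is $\mathbb{F}_p$-linear and injective, and $\{\mathcal{T}_{\boldsymbol{a}}(\xi_{\boldsymbol{a}}^s X^{\boldsymbol{a}})\mid 0\le s\le i_{\boldsymbol{a}}-1\}$ is an $\mathbb{F}_p$-basis of its image) shows that the full-orbit contributions to $V_\Delta$ are spanned precisely by the asserted set, and that a cyclotomic set contributes if and only if $\mathfrak{I}_{\boldsymbol{a}}\subseteq\Delta$: if some but not all of the orbit lies in $\Delta$, then no nonzero $\mathcal{T}(g)$ supported on that orbit lies in the span of $\{X^{\boldsymbol{a}}\mid\boldsymbol{a}\in\Delta\}$ because the orbit sum always involves all exponents of $\mathfrak{I}_{\boldsymbol{a}}$ with nonzero coefficients, while if the whole orbit is in $\Delta$ every such trace qualifies.

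Finally I would assemble the pieces: since distinct minimal cyclotomic sets involve disjoint sets of monomials, $R_J=\bigoplus_{\boldsymbol{a}\in\mathcal{A}} W_{\boldsymbol{a}}$ where $W_{\boldsymbol{a}}$ is the $\mathbb{F}_q$-span of the orbit of $X^{\boldsymbol{a}}$, and $\mathcal{T}$ respects this decomposition; therefore $V_\Delta=\bigoplus_{\boldsymbol{a}\in\mathcal{A}:\,\mathfrak{I}_{\boldsymbol{a}}\subseteq\Delta}\mathcal{T}(W_{\boldsymbol{a}})$, and by the previous paragraph each summand has the displayed $\mathbb{F}_p$-basis of size $i_{\boldsymbol{a}}$. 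Applying the bijection $\mathrm{ev}_J$ and using that it maps $\mathbb{F}_p$-linearly independent classes to $\mathbb{F}_p$-linearly independent vectors gives the claimed basis of $E_\Delta^{J,\sigma}$. Because this is exactly the pattern of \cite[Theorem 4]{galindo-hernando}, I would state explicitly ``the proof is analogous'' and only indicate the two points where the multivariate setting differs: the Frobenius action now acts coordinatewise with the two flavours of $\mathbb{Z}_{\mathcal{T}_j}$ (handled by Lemma \ref{lem:coset}-type bookkeeping), and the disjointness of cyclotomic orbits among monomials. I expect the main obstacle to be the careful verification that a trace supported on a cyclotomic orbit genuinely involves every exponent of that orbit with a nonzero coefficient — i.e.\ that no accidental cancellation can make a partial orbit lie in $\langle X^{\boldsymbol{a}}\mid\boldsymbol{a}\in\Delta\rangle$ — which is precisely the injectivity of $\lambda\mapsto\mathcal{T}_{\boldsymbol{a}}(\lambda X^{\boldsymbol{a}})$ and follows from the linear independence of characters / the normal basis theorem over $\mathbb{F}_{p^{i_{\boldsymbol{a}}}}$.
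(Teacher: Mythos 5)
Your proposal is correct and follows exactly the trace/cyclotomic-orbit argument that the paper's proof delegates in one line to \cite[Theorem 4]{galindo-hernando}: reduce to polynomials of the form $\mathcal{T}(g)$ via \cite[Proposition 5]{galindo-hernando}, split $R_J$ into the monomial spans $W_{\boldsymbol{a}}$ of the Frobenius orbits, observe that $\mathcal{T}$ respects this splitting and that a nonzero trace supported on an orbit involves every exponent of that orbit with nonzero coefficient (so only full orbits inside $\Delta$ contribute), and transport the resulting bases through the bijection $\mathrm{ev}_J$. One small correction: the collapse of the full trace on an orbit is $\mathcal{T}(\lambda X^{\boldsymbol{a}}) = \mathcal{T}_{\boldsymbol{a}}\bigl(\mathrm{Tr}_{\mathbb{F}_q/\mathbb{F}_{p^{i_{\boldsymbol{a}}}}}(\lambda)\, X^{\boldsymbol{a}}\bigr)$ --- the relative trace goes \emph{inside} rather than appearing as a scalar factor $r/i_{\boldsymbol{a}}$ in front, which in characteristic $p$ would annihilate everything whenever $p$ divides $r/i_{\boldsymbol{a}}$; it is the surjectivity of this relative trace that makes every $\mathcal{T}_{\boldsymbol{a}}(\mu X^{\boldsymbol{a}})$, $\mu \in \mathbb{F}_{p^{i_{\boldsymbol{a}}}}$, reachable from $\mathcal{T}(W_{\boldsymbol{a}})$.
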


\subsection{Binary and ternary LCD subfield-subcodes coming from the univariate case}

We devote this section to provide binary and ternary LCD codes obtained as subfield-subcodes of univariate $J$-affine variety codes. The reasoning in Proposition \ref{pro:uni} and the above paragraphs in Section \ref{secdos} support the proof. We assume that $p$ equals 2 or 3.

\begin{pro}\label{pro:nok}
Let $N$ be a positive integer such that $N-1$ divides $q-1$. Recall that $q=p^r$ for a positive integer $r$. With the above notation, write $\mathcal{A}_1 = \{ a_0 =0 < a_1 < a_2 < \cdots < a_z \}$ the first set in the above given partition of $\mathcal{A}$. Let $t \in \{1, 2, \ldots, z\}$, and set $\Delta = \Delta_1 \cup \Delta_2$, where
$$ \Delta_1 = \mathfrak{I}_{a_0} \cup  \mathfrak{I}_{a_1} \cup \cdots \cup \mathfrak{I}_{a_t}$$ and
$\Delta_2$ is the union of the cyclotomic cosets with the reciprocal elements to those in $\Delta_1$. Then the dual code of  $E^{J,\sigma}_\Delta$ over $\mathbb{F}_p$ is LCD and has parameters: $[N-1, N-1 - \mathrm{card} (\Delta), \ge 2a_{t+1}  ]_p$ when $J=\{1\}$ and $[N, N - \mathrm{card} (\Delta), \ge 2a_{t+1}  ]_p$ otherwise ($J=\emptyset$).
\end{pro}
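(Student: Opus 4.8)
The plan is to reduce the statement to two facts: an LCD criterion derived from Theorem~\ref{teo} applied at the level of cyclotomic sets, and a minimum-distance estimate inherited from the univariate picture in Proposition~\ref{pro:uni}. First I would recall that, by the theory recalled before Theorem~\ref{ddimension}, the subfield-subcode $E^{J,\sigma}_\Delta$ has as dual precisely the trace code (or equivalently the subfield-subcode of the dual $J$-affine variety code $C^J_\Delta$); more to the point, $E^{J,\sigma}_\Delta$ is LCD if and only if $E^J_\Delta$ is LCD, because the hull behaves well under subfield-subcoding when $\Delta$ is a union of full cyclotomic sets. So the first real step is to observe that $\Delta = \Delta_1 \cup \Delta_2$ is, by construction, a union of the sets $\mathcal{R}_{\boldsymbol a}$ (really $\mathcal{R}_a$, one variable) of Theorem~\ref{teo}: $\Delta_1$ collects the cyclotomic cosets $\mathfrak{I}_{a_0},\dots,\mathfrak{I}_{a_t}$ and $\Delta_2$ adjoins, coset by coset, all reciprocal elements, which by Lemma~\ref{lem:coset} is again a union of full cyclotomic cosets of the same cardinalities. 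Hence $\Delta$ meets the hypothesis of Theorem~\ref{teo} and $E^J_\Delta$ (equivalently its dual) is LCD; the same then holds for the subfield-subcode and its dual over $\mathbb{F}_p$.

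Next I would compute the dimension. Since $E^{J,\sigma}_\Delta$ has dimension $\mathrm{card}(\Delta)$ by Theorem~\ref{ddimension} (the basis there is indexed by the minimal cyclotomic sets contained in $\Delta$, and the total count of basis vectors is $\sum i_{\boldsymbol a} = \mathrm{card}(\Delta)$), the length being $N-1$ when $J=\{1\}$ and $N$ when $J=\emptyset$, the dual code has dimension $N-1-\mathrm{card}(\Delta)$, respectively $N-\mathrm{card}(\Delta)$, as claimed.

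For the minimum distance, the key observation is that the dual of $E^{J,\sigma}_\Delta$ is contained in the dual of the larger $\mathbb{F}_q$-code obtained before restricting to $\mathbb{F}_p$; more precisely, the $\mathbb{F}_p$-dual of $E^{J,\sigma}_\Delta$ equals the subfield-subcode of $C^J_\Delta = E^J_{\mathcal{H}_J\setminus\Delta}$, and since $\mathfrak{I}_{a_0},\dots,\mathfrak{I}_{a_t}$ together with their reciprocals exhaust all cyclotomic cosets of exponents $<a_{t+1}$ and $>N-1-a_{t+1}$ — because $\{a_0<a_1<\cdots<a_z\}$ lists the symmetric-coset representatives in increasing order, so the consecutive "low" exponents $0,1,\dots,a_{t+1}-1$ all lie in $\Delta$ together with the mirrored high ones — the complement $\mathcal{H}_J\setminus\Delta$ contains no exponent in the interval giving a BCH-type consecutive run. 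Therefore $E^J_{\mathcal{H}_J\setminus\Delta}$ is contained in the $J$-affine variety code whose defining set is $\{0,1,\dots,\delta-1,N-\delta,\dots,N-2(,N-1)\}$ with $\delta=a_{t+1}$, and by the Reed--Solomon / BCH bound argument in the proof of Proposition~\ref{pro:uni} its minimum distance is at least $2a_{t+1}$. The $\mathbb{F}_p$-subfield-subcode of this code, which is exactly the dual of $E^{J,\sigma}_\Delta$, then has minimum distance at least $2a_{t+1}$ as well, since taking a subfield-subcode never decreases the minimum distance.

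The main obstacle, and the step I would be most careful about, is the bookkeeping in the distance argument: one must verify precisely that the union of $\mathfrak{I}_{a_0},\dots,\mathfrak{I}_{a_t}$ and all their reciprocals really does contain every exponent $<a_{t+1}$ (and its mirror) — equivalently, that each integer $a$ with $0\le a<a_{t+1}$ lies in some symmetric coset $\mathfrak{I}_{a_i}$ with $i\le t$, or else its coset is asymmetric and then both it and its reciprocal coset are swept into $\Delta$. Here the definition of $\mathcal{A}_1$ as the \emph{ordered} list of symmetric-coset representatives, combined with the fact that for an asymmetric coset the smallest element of $\mathfrak I_a\cup\mathfrak I_b$ still lies below $a_{t+1}$ only if that coset was already counted, needs to be unwound; once that is done, the inclusion in $\mathrm{Hyp}$-type / BCH-type defining sets and hence the distance bound $2a_{t+1}$ follow exactly as in Proposition~\ref{pro:uni}.
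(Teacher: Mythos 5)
Your proposal is correct and follows essentially the same route as the paper's (very terse) proof: LCD-ness from Theorem~\ref{teo} together with Lemma~\ref{lem:coset}, the dimension from Theorem~\ref{ddimension} since $\Delta$ is a union of complete cyclotomic sets, and the distance bound by reducing to the consecutive-run (BCH-type) argument of Proposition~\ref{pro:uni} and passing to the subfield-subcode. The bookkeeping step you flag --- that $\Delta$ contains all exponents $a$ with $0\le a<a_{t+1}$ together with their mirrors, i.e.\ $2a_{t+1}-1$ consecutive exponents --- is exactly the point the paper compresses into its final sentence, and your reason for it (the representative of the coset of such an $a$, or of its reciprocal coset, is an element of $\mathcal{A}_1$ smaller than $a_{t+1}$, hence equals some $a_i$ with $i\le t$) is the right one.
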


\begin{proof}
Theorem \ref{teo} and Lemma \ref{lem:coset} prove that our code is LCD. Theorem \ref{ddimension} determines the dimension of our code since the set $\Delta$ only contains complete cyclotomic sets. Finally, the same reasoning as in Proposition \ref{pro:uni} and the fact that we are considering subfields-subcodes proves the bound for the minimum distance. Notice that we have $2a_{t+1} -1$ consecutive elements in the dual case and that if $a_i$ is symmetric, the equality  $\mathfrak{I}_{a_i} = \mathfrak{I}_{a_{N-1-a_i}}$ holds.
\end{proof}

For the sake of generality, we provide formulae for the dimension under certain assumptions. First, we need some lemmas regarding cyclotomic sets, that are simply cyclotomic cosets for the one-variable case. The first one is essentially \cite[Lemma 8]{Aly}.

\begin{lem}\label{lem:aly}
Let $N>1$ be an integer such that $N-1$ divides $q-1$ and  assume that $p^{\lfloor r/2 \rfloor} < N-1 \le  p^r -1$. Then the cyclotomic sets $\mathfrak{I}_{a}$ have cardinality $r$  for all $1 \le a \le (N-1) p^{\lceil r/2\rceil}/(p^r -1)$.
\end{lem}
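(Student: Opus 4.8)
The plan is to reduce the statement to a purely arithmetic fact about the size of the $p$-cyclotomic coset $\mathfrak{I}_a \subseteq \mathbb{Z}/\langle N-1\rangle$ for small $a$, namely that $|\mathfrak{I}_a| = r$, and then to invoke (essentially verbatim) the argument of \cite[Lemma 8]{Aly}. Recall that $\mathfrak{I}_a$ is the orbit of $a$ under multiplication by $p$ modulo $N-1$, so its cardinality is the multiplicative order of $p$ in $\mathbb{Z}/\langle (N-1)/\gcd(a,N-1)\rangle$; in particular $|\mathfrak{I}_a|$ always divides $r$, since $p^r \equiv 1 \pmod{N-1}$ because $N-1 \mid q-1 = p^r-1$. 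Thus it suffices to rule out that $|\mathfrak{I}_a| = d$ for some proper divisor $d$ of $r$, under the hypotheses $p^{\lfloor r/2\rfloor} < N-1 \le p^r - 1$ and $1 \le a \le (N-1)p^{\lceil r/2\rceil}/(p^r-1)$.

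The key step is the following. Suppose $|\mathfrak{I}_a| = d < r$; then $d \le r/2 \le \lceil r/2 \rceil$, hence $d \mid r$ and $d \le \lfloor r/2 \rfloor$ as well (a proper divisor of $r$ is at most $r/2$). From $p^d \cdot a \equiv a \pmod{N-1}$ we get $(p^d - 1)\, a \equiv 0 \pmod{N-1}$, so $N-1$ divides $(p^d-1)a$. Now I would bound $a$ from above: since $a \le (N-1)p^{\lceil r/2\rceil}/(p^r-1)$ and $N-1 \le p^r - 1$, together with $p^d - 1 < p^d \le p^{\lfloor r/2\rfloor}$, we get
\[
(p^d - 1)\, a < p^{\lfloor r/2\rfloor} \cdot \frac{(N-1)\, p^{\lceil r/2\rceil}}{p^r - 1} = (N-1)\cdot \frac{p^{\lfloor r/2\rfloor + \lceil r/2\rceil}}{p^r - 1} = (N-1)\cdot\frac{p^r}{p^r-1}.
\]
Since $a \ge 1$, the integer $(p^d-1)a$ is a positive multiple of $N-1$ that is strictly less than $(N-1)p^r/(p^r-1)$; because $p^r/(p^r-1) < 2$, this forces $(p^d-1)a = N-1$ exactly. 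But then $N - 1 \le p^d - 1 \le p^{\lfloor r/2\rfloor} - 1 < p^{\lfloor r/2\rfloor}$, contradicting the hypothesis $p^{\lfloor r/2\rfloor} < N-1$. Hence no proper divisor $d$ of $r$ can occur, and $|\mathfrak{I}_a| = r$.

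The main obstacle — really the only place requiring care — is keeping the inequalities in the right direction and handling the edge cases correctly: in particular confirming that every proper divisor of $r$ is $\le \lfloor r/2\rfloor$ (true since if $d \mid r$ and $d < r$ then $d \le r/2$, and $d$ being an integer gives $d \le \lfloor r/2\rfloor$), and checking that the displayed chain still works when $r$ is odd so that $\lfloor r/2\rfloor \ne \lceil r/2\rceil$; the exponents still sum to $r$, so the computation is unchanged. One should also note that the claimed range $1 \le a \le (N-1)p^{\lceil r/2\rceil}/(p^r-1)$ is nonempty precisely because of the lower bound $N-1 > p^{\lfloor r/2\rfloor}$, which guarantees $(N-1)p^{\lceil r/2\rceil}/(p^r-1) > p^{\lfloor r/2\rfloor}p^{\lceil r/2\rceil}/(p^r-1) = p^r/(p^r-1) > 1$. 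This completes the proof; as remarked, it is the one-variable specialization of \cite[Lemma 8]{Aly}, and one could alternatively just cite that result.
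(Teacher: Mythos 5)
Your overall strategy is sound, and it is essentially the only available one: the paper gives no proof of this lemma, merely citing \cite[Lemma 8]{Aly}, so reconstructing the coset-order argument is exactly what is called for. Your reduction is correct up to and including the conclusion that if $d=\mathrm{card}(\mathfrak{I}_a)$ were a proper divisor of $r$, then $d\le\lfloor r/2\rfloor$ and the bound on $a$ forces $(p^d-1)a=N-1$ exactly.

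The final step, however, is wrong: from $(p^d-1)a=N-1$ with $a\ge 1$ you may only conclude $p^d-1\le N-1$ (indeed $p^d-1$ divides $N-1$), not $N-1\le p^d-1$; the inequality is reversed, so there is no contradiction with the hypothesis $p^{\lfloor r/2\rfloor}<N-1$. Concretely, for $p=2$, $r=4$, $N-1=15$, $a=5$ one has $d=2$ and $(p^d-1)a=15=N-1$, yet $N-1=15>3=p^d-1$: the configuration you claim is impossible occurs, and what actually rules it out is that $a=5$ exceeds the admissible bound $(N-1)p^{\lceil r/2\rceil}/(p^r-1)=4$. That is where the contradiction must be located in general: the identity gives $a=(N-1)/(p^d-1)$, and since $p^{\lceil r/2\rceil}(p^d-1)\le p^{\lceil r/2\rceil+\lfloor r/2\rfloor}-p^{\lceil r/2\rceil}=p^r-p^{\lceil r/2\rceil}<p^r-1$, one gets $a=(N-1)/(p^d-1)>(N-1)p^{\lceil r/2\rceil}/(p^r-1)$, contradicting the assumed range of $a$. (Equivalently, $p^d-1=(N-1)/a\ge (p^r-1)/p^{\lceil r/2\rceil}>p^{\lfloor r/2\rfloor}-1\ge p^d-1$.) With that one line repaired, the argument is complete and correct.
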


Next we characterize symmetric cyclotomic sets. Recall that $q=p^r$ and we are interested only in the cases $p=2$ and $p=3$.

\begin{lem}\label{lem:sere}
Let $N>1$ be an integer such that $N-1$ divides $q-1$, where $p \in \{2,3 \}$. Then, the cyclotomic set  $\mathfrak{I}_{a}$, with $a>0$, is symmetric  if and only if  $$a = \frac{N-1}{p^j +1},$$
for some $j \in \{0, 1,  \ldots, r-1 \}$ such that  $p^j +1$ is a divisor of $N-1$.
\end{lem}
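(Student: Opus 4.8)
The plan is to convert the word ``symmetric'' into a congruence on $a$ modulo $N-1$ and then analyse that congruence, the restriction $p\in\{2,3\}$ entering only at the very end. Recall from Theorem~\ref{teo} that for $0<a<N-1$ the reciprocal of $a$ is $N-1-a$, which we read as $-a$ in $\Z/\langle N-1\rangle$; since reciprocation commutes with multiplication by $p$ (because $-(p\,a)=p\,(-a)$), the reciprocal of the whole coset $\mathfrak{I}_a$ is $\mathfrak{I}_{-a}$, and by Lemma~\ref{lem:coset} the reciprocal of each element of $\mathfrak{I}_a$ lies in $\mathfrak{I}_{-a}$. Hence $\mathfrak{I}_a$ is symmetric exactly when $\mathfrak{I}_{-a}=\mathfrak{I}_a$, i.e.\ when $-a\in\mathfrak{I}_a$, i.e.\ when there is an integer $i$ with $0\le i\le \mathrm{card}(\mathfrak{I}_a)-1\le r-1$ such that
\[
p^{i}\,a\equiv -a \pmod{N-1},\qquad\text{equivalently}\qquad (N-1)\mid (p^{i}+1)\,a .
\]
Throughout, the exponents in play satisfy $0<a<N-1$, so no degenerate ``zero-like'' coordinate (possible when $J=\emptyset$) interferes.

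With this reformulation the implication ``$\Leftarrow$'' is immediate: if $a=(N-1)/(p^{j}+1)$ with $p^{j}+1\mid N-1$ and $0\le j\le r-1$, then $(p^{j}+1)\,a=N-1\equiv 0\pmod{N-1}$, so the criterion holds with $i=j$ and $\mathfrak{I}_a$ is symmetric.

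For ``$\Rightarrow$'', assume $\mathfrak{I}_a$ is symmetric and let $j$ be the \emph{smallest} nonnegative integer with $(N-1)\mid(p^{j}+1)\,a$; the criterion gives $j\le r-1$. First I would extract the structural content of minimality: on $\mathfrak{I}_a$ the negation map $x\mapsto -x$ coincides with the shift $x\mapsto p^{j}x$, so applying it twice is the identity on $\mathfrak{I}_a$; minimality of $j$ then forces $\mathrm{card}(\mathfrak{I}_a)=2j$ — except in the degenerate case $j=0$, where $2a\equiv 0$ and hence $a=(N-1)/2$, which is the statement with $j=0$. In particular $\mathrm{card}(\mathfrak{I}_a)$ divides $r$, so $2j\mid r$, $j\le\lfloor r/2\rfloor$ and $p^{j}+1$ divides $q-1$. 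It now suffices to promote $(N-1)\mid(p^{j}+1)\,a$ to the \emph{equality} $(p^{j}+1)\,a=N-1$: given this, $p^{j}+1=(N-1)/a$ is a divisor of $N-1$ and $a=(N-1)/(p^{j}+1)$, as required. Writing $(p^{j}+1)\,a=k\,(N-1)$ with $k\ge 1$ (note $k<p^{j}+1$ since $a<N-1$), the remaining task is to force $k=1$, and this is where $p\in\{2,3\}$ must be used, together with $2j\mid r$ and $N-1\mid p^{r}-1$, to rule out an ``intermediate'' common factor of $p^{j}+1$ and $N-1$. I expect this final divisibility bookkeeping — pinning down $k=1$ in small characteristic — to be the delicate point of the argument; everything preceding it is a routine manipulation of cyclotomic cosets via Lemma~\ref{lem:coset}.
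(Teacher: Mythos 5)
Your reduction of symmetry to the congruence $(N-1)\mid(p^{i}+1)a$, your proof of the ``if'' direction, and your observation that the minimal such $j$ satisfies $\mathrm{card}(\mathfrak{I}_a)=2j$ (hence $2j\mid r$) are all correct. But the proof is not complete: the whole ``only if'' direction rests on promoting the divisibility $(N-1)\mid(p^{j}+1)a$ to the equality $(p^{j}+1)a=N-1$, and you explicitly defer exactly that step (``the remaining task is to force $k=1$ \dots the delicate point''). This is a genuine gap, and it cannot be filled, because the step is false. Take $p=2$, $r=8$, $q=256$, $N-1=17$. The cyclotomic cosets modulo $17$ are $\{0\}$, $\mathfrak{I}_1=\{1,2,4,8,9,13,15,16\}$ and $\mathfrak{I}_3=\{3,5,6,7,10,11,12,14\}$. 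Since $-3\equiv 14\in\mathfrak{I}_3$, the set $\mathfrak{I}_3$ is symmetric, yet the only divisor of $17$ of the form $2^{j}+1$ is $17$ itself, so $(N-1)/(2^{j}+1)=1\neq 3$. In your notation the minimal $j$ with $17\mid(2^{j}+1)\cdot 3$ is $j=4$ and $(2^{4}+1)\cdot 3=3\cdot 17$, i.e.\ $k=3$: no divisibility bookkeeping in characteristic $2$ or $3$ will force $k=1$, because the ``only if'' direction of the lemma as stated is simply not true.

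For comparison, the paper's own proof is a one-line assertion that symmetry means ``there exists $j$ with $a=N-1-ap^{j}$'', i.e.\ it silently replaces the congruence $ap^{j}\equiv -a\pmod{N-1}$ by an exact equality --- precisely the step you isolated as delicate. So your diagnosis of where the difficulty sits is sharper than the paper's treatment, but the honest conclusion is that only the ``if'' direction (which both you and the paper establish) holds. The correct statement of the criterion is: $\mathfrak{I}_a$ is symmetric if and only if $(N-1)/\gcd(a,N-1)$ divides $p^{j}+1$ for some $j$; since the minimal such $j$ satisfies $2j\mid r$, this still yields $a\geq\gcd(a,N-1)\geq (N-1)/(p^{\lfloor r/2\rfloor}+1)$ for every symmetric set, which is all that Proposition~\ref{pro:sufficient} and the subsequent dimension counts actually use.
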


\begin{proof}
It follows from the fact that $\mathfrak{I}_{a}$ is symmetric whenever there exists $j \in \{0, 1, \ldots, r-1 \}$ such that $a = N-1 - ap^j$, that is $a=(N-1)/(p^j +1)$.
\end{proof}

The following result gives sufficient conditions for asymmetry of cyclotomic sets when $m=1$.

\begin{pro}\label{pro:sufficient}
Keep the above notations, that is $N>1$ such that $N-1$ divides $q-1$ and $p \in \{2,3 \}$. Then:
\begin{itemize}
\item If $r$ is odd,  there are no symmetric cyclotomic set, unless when $p=3$ and $2$ divides $N-1$. In this case, the unique symmetric cyclotomic set is $\mathfrak{I}_{(N-1)/2}$.
\item Otherwise (for $r$ even),  $\mathfrak{I}_{a}$ is asymmetric if  $a < (N-1) /(p^\frac{r}{2} +1)$.
\end{itemize}

\end{pro}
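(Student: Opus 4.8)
The plan is to deduce both bullet points directly from the characterization of symmetric cyclotomic sets in Lemma~\ref{lem:sere}, which says that for $a>0$ the set $\mathfrak{I}_a$ is symmetric if and only if $a = (N-1)/(p^j+1)$ for some $j \in \{0,1,\ldots,r-1\}$ with $p^j+1 \mid N-1$. So the entire argument reduces to analyzing for which $j$ the integer $p^j+1$ can divide $N-1$, under the standing hypothesis that $N-1 \mid q-1 = p^r-1$ and $p \in \{2,3\}$.

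For the first bullet (case $r$ odd): suppose $p^j+1 \mid N-1$ for some $j$ with $1 \le j \le r-1$ (the case $j=0$ gives $2 \mid N-1$, handled separately). Since $N-1 \mid p^r-1$, we get $p^j+1 \mid p^r-1$. The key number-theoretic fact I would invoke here is the classical statement about when $p^j+1$ divides $p^r-1$: writing $d=\gcd(j,r)$, one has $\gcd(p^j+1, p^r-1)$ equal to $p^d+1$ when $r/d$ is even and equal to $1$ or $2$ when $r/d$ is odd (the value $2$ occurring only when $p$ is odd). Since $r$ is odd and $1\le j$, $r/d$ is odd (a divisor of the odd number $r/d$... more precisely $r/d \mid r$ is odd), so $p^j+1$ and $p^r-1$ share only the factor $1$ (if $p=2$) or $2$ (if $p=3$). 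Because $p^j+1 \ge 3$ for $j\ge 1$, this is impossible. Hence the only way to get a symmetric set is through $j=0$, i.e.\ $2 \mid N-1$, giving $a=(N-1)/2$; and this requires $p$ odd, since for $p=2$ we have $p^0+1=2$ but then $a=(N-1)/2$ would still need $2\mid N-1$ — wait, for $p=2$, $2\mid N-1$ forces $N-1$ even, yet $N-1\mid 2^r-1$ which is odd, a contradiction. So for $p=2$ and $r$ odd there are no symmetric cyclotomic sets at all, while for $p=3$ the sole possibility is $\mathfrak{I}_{(N-1)/2}$ when $2\mid N-1$, exactly as claimed.

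For the second bullet (case $r$ even): if $\mathfrak{I}_a$ were symmetric with $a < (N-1)/(p^{r/2}+1)$, then by Lemma~\ref{lem:sere} we would have $a=(N-1)/(p^j+1)$ for some $j\in\{0,\ldots,r-1\}$, hence $p^j+1 \le (N-1)/a$. But $a < (N-1)/(p^{r/2}+1)$ means $(N-1)/a > p^{r/2}+1$, so $p^j+1 > p^{r/2}+1$, forcing $j > r/2$. Then I would argue that $p^j+1 \mid N-1 \mid p^r-1$ together with $j>r/2$ is impossible by the same divisibility lemma: with $d=\gcd(j,r)$ one has $d \mid r$ and $d \le j < r$; if $r/d$ is odd then as before $\gcd(p^j+1,p^r-1)\le 2 < p^j+1$, contradiction; if $r/d$ is even, then $\gcd(p^j+1,p^r-1)=p^d+1$, so $p^j+1 \mid$ -- no, we need $p^j+1$ itself to divide, which forces $p^j+1 = p^d+1$, i.e.\ $j=d$, i.e.\ $j\mid r$; but $r/2 < j < r$ and $j\mid r$ is impossible. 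Either way we reach a contradiction, so no such symmetric set exists, proving the second bullet.

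The main obstacle I anticipate is getting the divisibility lemma about $\gcd(p^j+1,\,p^r-1)$ stated and applied cleanly; this is elementary but the parity bookkeeping (which of $r$, $j$, $r/d$ is even or odd, and the special role of the factor $2$ when $p=3$) is exactly the kind of place where one can slip. A safe alternative route that sidesteps the general gcd formula: write $a=(N-1)/(p^j+1)$ and observe directly that symmetry of $\mathfrak{I}_a$ (from the proof of Lemma~\ref{lem:sere}) means $p^j a \equiv -a \pmod{N-1}$, i.e.\ $p^{2j}a \equiv a$, so the cyclotomic coset of $a$ has size dividing $2j$ and more importantly $p^{2j}\equiv 1 \pmod{(N-1)/\gcd(a,N-1)}$; combined with $N-1\mid p^r-1$ and a size/ordering comparison one forces $2j \ge r$ in the even case and rules out $j\ge 1$ cleanly in the odd case. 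I would present whichever of these two is shorter once the constants are pinned down, but the gcd-lemma approach is the more transparent one to write.
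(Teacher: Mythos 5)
Your proof is correct and follows essentially the same route as the paper: both reduce the statement, via Lemma~\ref{lem:sere}, to determining for which $j$ the integer $p^j+1$ can divide $N-1$ (and hence $q-1=p^r-1$). The only difference is the tool used for that divisibility step --- you invoke the classical formula for $\gcd(p^j+1,\,p^r-1)$ in terms of $d=\gcd(j,r)$ and the parity of $r/d$, whereas the paper performs the explicit Euclidean division of $X^r-1$ by $X^j+1$ and specializes $X=p$; both yield the same conclusion that $p^j+1\mid p^r-1$ forces $j\mid r$ with $r/j$ even, up to the exceptional factor $2$ when $p=3$.
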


\begin{proof}
For a start we consider the case when $r$ is odd. First we assume that $j=0$, then $p^j +1 =2$. When $p=2$, $q -1= 2^r -1 = 2(2^{r-1} -1) +1$ and so $N-1$ is odd,  therefore $p^j +1$ does not divide $N-1$ and there is no symmetric cyclotomic set by Lemma \ref{lem:sere}. In case $p=3$, if $N-1$ is even, then $p^j +1 $ divides $N-1$ and we have a cyclotomic symmetric set by Lemma \ref{lem:sere}.

Suppose now that $j>0$, write $r=kj + l$, $0 \leq l < j$, and consider the Euclidean division between the polynomials $X^r-1$ and $X^j +1$:
 $$
 X^r -1 = \left(X^{r-j} - X^{r-2j} + X^{r-3j} - \cdots + (-1)^{k-1} X^l \right) \left(X^j +1 \right) +  (-1)^{k} X^l -1.
 $$
Specializing $X$ to the value $p$, we get that if $j$ does not divide $r$ then  $p^j +1$ does not divide $q-1$. The same holds on the contrary, when $l=0$, since $r$ odd implies $k$ odd and the remainder is not zero.

Finally assume that $r$ is even. The symmetric cyclotomic set with smallest representative is given by the largest divisor of the form $p^j +1$ of $N-1$, for  $j \in \{0, 1, \ldots, r-1 \}$. The largest possible divisor is given by $j = r/2$, hence the representative of a symmetric set is larger than or equal to $(N-1) /(p^\frac{r}{2} +1)$ and the result holds.
\end{proof}

We are now ready to explicitly determine all the parameters of some of the codes described in Proposition \ref{pro:nok}. We consider the first cyclotomic set $\mathfrak{I}_{0}$, pairs of asymmetric cyclotomic sets and possibly, a symmetric cyclotomic and $\mathfrak{I}_{N-1}$. Actually, our next two results hold for any prime $p$.

\begin{teo}
\label{35}
Keep the above notation where $N$ is a positive integer such that $N-1$ divides $q-1$. Assume that $p^{\lfloor r/2 \rfloor} < N-1 \le  p^r -1$ and consider the first set of representatives of cyclotomic sets $\mathcal{A}_1 = \{ a_0 =0 < a_1 < a_2 < \cdots < a_z \}$ in the above given partition of $\mathcal{A}$. Let $t \in \{1, 2, \ldots, z\}$ be such that $a_t \le (N-1) p^{\lceil r/2\rceil}/(p^r -1)$ and set $\Delta = \Delta_1 \cup \Delta_2$, where
$$ \Delta_1 = \mathfrak{I}_{a_0} \cup  \mathfrak{I}_{a_1} \cup \cdots \cup \mathfrak{I}_{a_t}$$ and
$\Delta_2$ the union of the cyclotomic cosets with reciprocal elements to those in $\Delta_1$. Then,

\begin{itemize}
\item If $r$ is odd or if $r$ is even and $a_t \neq (N-1) p^{r/2}/(p^r -1)$, the dual code of  $E^{J,\sigma}_\Delta$, over $\mathbb{F}_p$, is LCD and has parameters: $[N-1, N- 2 t r , \ge 2a_{t+1}  ]_p$ when $J=\{1\}$ and $[N, N - 2 t r, \ge 2a_{t+1}  ]_p$ otherwise ($J=\emptyset$).
\item If $r$ is even and $a_t = (N-1) p^{r/2}/(p^r -1)$, the dual code of $E^{J,\sigma}_\Delta$,  over $\mathbb{F}_p$, is LCD and has parameters: $[N-1, N- (2 t-1) r , \ge 2a_{t+1}  ]_p$ when $J=\{1\}$ and $[N, N - (2 t-1) r, \ge 2a_{t+1}  ]_p$ otherwise ($J=\emptyset$).
\end{itemize}
\end{teo}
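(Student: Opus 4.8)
The plan is to separate the statement into its three components -- the LCD property, the lower bound $2a_{t+1}$ on the minimum distance, and the exact dimension -- and to observe that only the last one is new.

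For the LCD property and the distance bound: the set $\Delta=\Delta_1\cup\Delta_2$ built here is exactly one of the sets to which Proposition \ref{pro:nok} applies (it is a union of complete cyclotomic sets together with the cyclotomic sets of the reciprocal elements of its members), so Proposition \ref{pro:nok} gives at once that the $\mathbb{F}_p$-dual of $E^{J,\sigma}_\Delta$ is LCD, has length $N-1$ when $J=\{1\}$ and $N$ when $J=\emptyset$, and has minimum distance at least $2a_{t+1}$. The extra hypotheses $p^{\lfloor r/2\rfloor}<N-1\le p^r-1$ and $a_t\le (N-1)p^{\lceil r/2\rceil}/(p^r-1)$ play no role here; they are needed only to pin down the dimension.

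For the dimension, first recall (as in the proof of Proposition \ref{pro:nok}, via Theorem \ref{ddimension}) that, since $\Delta$ is a union of complete cyclotomic sets, the $\mathbb{F}_p$-dimension of $E^{J,\sigma}_\Delta$ is exactly $\mathrm{card}(\Delta)$, so the dimension of the dual is $n_J-\mathrm{card}(\Delta)$ and the whole matter reduces to counting $\Delta$. Lemma \ref{lem:aly} tells us that the hypothesis $a_t\le (N-1)p^{\lceil r/2\rceil}/(p^r-1)$ forces $\mathrm{card}(\mathfrak{I}_{a_i})=r$ for all $i\in\{1,\dots,t\}$, while $\mathfrak{I}_{a_0}=\mathfrak{I}_0=\{0\}$ has cardinality $1$. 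By Lemma \ref{lem:coset}, if $a_i$ is asymmetric then the cyclotomic set of one of its reciprocal elements is disjoint from $\mathfrak{I}_{a_i}$ and has the same cardinality $r$; moreover distinct $a_i$ produce distinct reciprocal sets, all different from $\mathfrak{I}_{a_1},\dots,\mathfrak{I}_{a_t}$ since their representatives lie in $\mathcal{A}_2$. The index $a_0=0$ is the only one needing care: it is symmetric for $J=\{1\}$ (its sole reciprocal being $0$ itself) but asymmetric for $J=\emptyset$, where its reciprocal is $N-1$ with $\mathrm{card}(\mathfrak{I}_{N-1})=1$; this accounts for the one-unit gap between the two values of $n_J$ and of the dimension.

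It remains to decide whether any of $\mathfrak{I}_{a_1},\dots,\mathfrak{I}_{a_t}$ is symmetric, and this is where Proposition \ref{pro:sufficient} and Lemma \ref{lem:sere} enter. If $r$ is odd, or if $r$ is even and $a_t<(N-1)p^{r/2}/(p^r-1)$, one argues that all of $a_1,\dots,a_t$ are asymmetric, so each contributes $2r$ to $\mathrm{card}(\Delta)$ and we obtain the $2tr$-type count; if $r$ is even and $a_t$ equals that critical value, then $\mathfrak{I}_{a_t}$ is symmetric, hence equal to its own reciprocal set, so it contributes only $r$ rather than $2r$, lowering $\mathrm{card}(\Delta)$ by $r$ and producing the $(2t-1)r$-type count. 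Substituting the resulting value of $\mathrm{card}(\Delta)$ into $n_J-\mathrm{card}(\Delta)$ yields the announced parameters. I expect the main difficulty to be precisely this bookkeeping: one has to confirm that none of the cyclotomic sets in play coincide unexpectedly, that symmetry of $\mathfrak{I}_{a_t}$ is correctly detected in terms of $a_t$ (including the exceptional symmetric set $\mathfrak{I}_{(N-1)/2}$ for $p=3$ with $N-1$ even recorded in Proposition \ref{pro:sufficient}), and that the $a_0=0$ term is treated coherently in the two cases $J=\{1\}$ and $J=\emptyset$.
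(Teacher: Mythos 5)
Your proof follows essentially the same route as the paper's: the LCD property and the distance bound are delegated to Proposition \ref{pro:nok}, and the dimension is obtained by counting $\mathrm{card}(\Delta)$ via Lemma \ref{lem:aly} (all relevant cyclotomic sets have cardinality $r$), Lemma \ref{lem:coset} (reciprocal sets are disjoint from, and in bijection with, the originals) and Proposition \ref{pro:sufficient} (only $\mathfrak{I}_{a_t}$ at the critical value can be symmetric). If anything, your bookkeeping of $\mathfrak{I}_{a_0}$, of $\mathfrak{I}_{N-1}$ when $J=\emptyset$, and of the exceptional symmetric set for $p=3$ is more explicit than the paper's, which simply asserts these facts and concludes.
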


\begin{proof}
The bound for the minimum distance follows  from Proposition \ref{pro:nok}. Next we give a proof for the dimension of the codes.

If $r$ is odd or if $r$ is even and $a_t \neq (N-1) p^{r/2}/(p^r -1)$, then, by Lemma \ref{lem:aly}, the cardinality of all cyclotomic sets considered to define $\Delta$ is $r$, excepting $\mathfrak{I}_0$ (and occasionally $\mathfrak{I}_{N-1}$ if $J = \emptyset $); note that both sets have  cardinality 1. Moreover, by Proposition \ref{pro:sufficient}, the cyclotomic sets  $\mathfrak{I}_{a_j}$, $j \neq 0, N-1$, considered to define $\Delta$ are asymmetric, which concludes the proof.

If $r$ is even and $a_t = (N-1) p^{r/2}/(p^r -1)$, then the cardinality of all cyclotomic sets considered to define $\Delta$ (with the exception of $\mathfrak{I}_0$ and possibly $\mathfrak{I}_{N-1}$) is still $r$ by Lemma \ref{lem:aly}. Furthermore, by Proposition \ref{pro:sufficient}, all the cyclotomic sets considered to define $\Delta$ are asymmetric but $\mathfrak{I}_0$ and possibly $\mathfrak{I}_{N-1}$, and $\mathfrak{I}_{a_t}$ which is symmetric. Therefore, the equality $ 2 r (t-1) + r = (2t-1)r$ finishes the proof.
\end{proof}

To conclude this subsection, we prove that using Lemma 9 in \cite{Aly} one can avoid to consider representatives of cyclotomic sets, however in some cases, one will obtain codes with a smaller range of minimum distances. With our notation, Lemma 9 in \cite{Aly} is the following result.

\begin{lem}\label{le:TodosDistintos}
With the above notation, let $N$ be a positive integer such that $N-1 \mid p^r-1$ and suppose that $p^{\lfloor r/2 \rfloor} < N-1 \le  p^r -1$. If $x,y$ are distinct integers in the range  $1 \leq x,y \leq \min\{\lfloor (N-1)p^{\lceil r/2 \rceil}/(p^r-1)-1\rfloor,N-2\}$ which are not zero modulo $p$, then the cyclotomic cosets defined by $x$ and $y$ are different.
\end{lem}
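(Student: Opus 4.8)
The plan is to argue by contradiction, mimicking the classical proof (due to Aly–Klappenecker–Sarvepalli) that controls the size of cyclotomic cosets in the BCH setting and then translates it into the statement that small non-zero residues lie in distinct cosets. Suppose $x$ and $y$ are distinct integers in the indicated range, both not divisible by $p$, but lying in the same cyclotomic coset modulo $N-1$; equivalently $y \equiv x p^i \pmod{N-1}$ for some $0 < i < r$ (the case $i=0$ being excluded since $x\neq y$ and both are in $\{1,\dots,N-2\}$). First I would reduce to the case $i \le \lceil r/2\rceil$: if $i > \lceil r/2\rceil$ then, since $x \equiv y p^{r-i}\pmod{N-1}$ and $r-i < r/2 \le \lceil r/2\rceil$, we may swap the roles of $x$ and $y$, so without loss of generality $1 \le i \le \lceil r/2\rceil$.

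Next I would lift the congruence to an equality over $\Z$. Because $1 \le x \le (N-1)p^{\lceil r/2\rceil}/(p^r-1)-1$, we have $x p^i \le x p^{\lceil r/2\rceil} < N-1 + \text{(a controlled error)}$; more precisely $x p^i < (N-1)p^{\lceil r/2\rceil + i}/(p^r-1) \le (N-1)p^{2\lceil r/2\rceil}/(p^r-1)$, which for $r$ even equals $N-1$ exactly and for $r$ odd is $(N-1)p/(p^r-1)\cdot p^{\,r-1} $-ish, in any case at most a small multiple of $N-1$. The key point to extract is that $\lfloor x p^i/(N-1)\rfloor$ is small — at most $1$ in the relevant range — so that $y = x p^i - (N-1)$ or $y = x p^i$ as integers. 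The first alternative gives $x p^i - y = N-1$, i.e. $N-1 = x p^i - y$ with $0 < y < xp^i$; since $p \nmid x$ and $0<i<r$, reducing modulo $p$ shows $p \mid y$ — contradicting the hypothesis that $y$ is not zero modulo $p$. The second alternative, $y = x p^i$ as integers, forces $p\mid y$ as well (again because $i \ge 1$), the same contradiction. So no such pair exists.

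The main obstacle, and the step I would spend the most care on, is the integer-lifting estimate: I must verify that the bound $1 \le x,y \le \min\{\lfloor (N-1)p^{\lceil r/2\rceil}/(p^r-1)\rfloor - 1, N-2\}$ is exactly what is needed to guarantee $xp^i < 2(N-1)$ (so the floor is $\le 1$), using $i \le \lceil r/2\rceil$ and $p^{\lfloor r/2\rfloor} < N-1$. This is a routine but slightly delicate inequality chase separating the cases $r$ even and $r$ odd; once it is in hand, the divisibility-by-$p$ contradiction is immediate. I would then remark that this is precisely \cite[Lemma 9]{Aly} transcribed into the present notation (with $q-1$ replaced by the divisor $N-1$, which is legitimate since $N-1 \mid p^r-1$ and the argument only uses that $N-1$ is the modulus and the stated size bounds), so no genuinely new work beyond the bookkeeping is required.
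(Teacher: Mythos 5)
The paper does not actually prove this lemma: it is presented verbatim as Lemma~9 of \cite{Aly} translated into the paper's notation, so your attempt to reconstruct the argument is reasonable in spirit. However, as written it contains a genuine error. In your ``first alternative'' you have $N-1 = xp^i - y$ and claim that reducing modulo $p$ shows $p \mid y$. It does not: reduction gives $y \equiv -(N-1) \pmod{p}$, and since $N-1$ divides $p^r-1$ we have $\gcd(N-1,p)=1$, so this congruence is \emph{consistent} with $p \nmid y$ and yields no contradiction. (Concretely, with $p=2$, $r=3$, $N-1=7$, $x=3$, $i=2$ one gets $xp^i-(N-1)=5$, which is odd.) So the problematic case $y = xp^i-(N-1)$ is not disposed of, and your proof is incomplete precisely where you flagged the ``delicate inequality chase.''

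The repair is to sharpen your initial reduction: instead of $i \le \lceil r/2\rceil$, reduce to $1 \le i \le \lfloor r/2 \rfloor$. Indeed, if $i > \lfloor r/2\rfloor$ then $r-i \le \lceil r/2\rceil - 1 \le \lfloor r/2\rfloor$ and $x \equiv y p^{r-i} \pmod{N-1}$, so one may swap $x$ and $y$. With $i \le \lfloor r/2\rfloor$ the lift is exact and the second case never arises:
\[
xp^i \le \left(\tfrac{(N-1)p^{\lceil r/2\rceil}}{p^r-1}-1\right)p^{\lfloor r/2\rfloor}
= (N-1)+\tfrac{N-1}{p^r-1}-p^{\lfloor r/2\rfloor} \le N-1+1-p \le N-2 < N-1 ,
\]
so $0 < xp^i < N-1$ and $0 < y < N-1$ together with $y \equiv xp^i \pmod{N-1}$ force $y = xp^i$ as integers, whence $p \mid y$ since $i \ge 1$, the desired contradiction. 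Note that this uses only the exponent identity $\lceil r/2\rceil + \lfloor r/2\rfloor = r$ and requires no case split between $r$ even and odd; in particular the quantity $xp^i$ never reaches $2(N-1)$, so the estimate you planned to labour over is unnecessary once the correct half of $r$ is used in the reduction.
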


The latter lemma determines an interval of integers  where the corresponding cyclotomic sets are all different and allows us to prove the following result.

\begin{teo}\label{te:ConDimension}
Let $q = p^r$, where $r$ is a positive integer and $p \in \{2,3\}$. Let $N$ be a positive integer such that $N-1$ divides $q-1$ and $p^{\lfloor r/2 \rfloor} < N-1 \le  p^r -1$. Then, for each integer $\delta$ such that
$2\le \delta\le
\min\{\lfloor (N-1)p^{\lceil r/2 \rceil}/(p^r-1)\rfloor,N-2\}$,
there exists an LCD code with length either  $N-1$ or $N$, designed minimum distance $\ge 2\delta$    and  dimension
$$
k=N-2(r\lceil (\delta-1)(1-1/p)\rceil).
$$
\end{teo}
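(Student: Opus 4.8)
The plan is to deduce this from Theorem \ref{35} by choosing the parameter $t$ appropriately and then counting the cardinality of the union of cyclotomic cosets that go into $\Delta_1$. The key observation is the following: Lemma \ref{le:TodosDistintos} guarantees that on the interval $[1,\delta-1]$ (which, because $\delta-1\le \min\{\lfloor (N-1)p^{\lceil r/2\rceil}/(p^r-1)-1\rfloor,N-2\}$, lies inside the range where Lemma \ref{le:TodosDistintos} applies), any two integers not divisible by $p$ generate distinct cyclotomic cosets; and Lemma \ref{lem:aly} says each such coset has exactly $r$ elements. Moreover, by Proposition \ref{pro:sufficient}, none of these cosets is symmetric (their representatives are strictly smaller than $(N-1)/(p^{r/2}+1)$ when $r$ is even, and there is essentially no symmetric coset when $r$ is odd). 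So the combinatorial heart of the argument is to count how many of the integers in $\{1,2,\ldots,\delta-1\}$ are not divisible by $p$, since each contributes one new coset of size $r$.

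First I would make precise the set $\Delta_1$ I want to use: take $\Delta_1=\bigcup_{a\in S}\mathfrak{I}_a$ where $S=\{0\}\cup\{a : 1\le a\le \delta-1,\ p\nmid a\}$, and let $\Delta_2$ be the union of the reciprocal cosets, as in Theorem \ref{35}. Then $\Delta=\Delta_1\cup\Delta_2$ contains all of $\{0,1,\ldots,\delta-1\}$ (each $a$ divisible by $p$ lies in the coset generated by $a/p^k$ for the largest admissible $k$, hence already in $\Delta_1$), and symmetrically it contains the top block of reciprocals, so by the argument in Proposition \ref{pro:nok} the dual of $E^{J,\sigma}_\Delta$ has $2\delta-1$ consecutive exponents (in the cyclic sense) and hence designed minimum distance $\ge 2\delta$. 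The number of integers in $\{1,\ldots,\delta-1\}$ not divisible by $p$ is exactly $(\delta-1)-\lfloor(\delta-1)/p\rfloor=\lceil(\delta-1)(1-1/p)\rceil$; call this number $t$. Each of the $t$ corresponding cosets is asymmetric of size $r$, so by Lemma \ref{lem:coset} the reciprocal cosets are disjoint from them and also of size $r$, whence $\mathrm{card}(\Delta)=2rt=2r\lceil(\delta-1)(1-1/p)\rceil$ (the coset $\mathfrak{I}_0$ and its reciprocal, when $J=\emptyset$, being $\mathfrak{I}_0$ and $\mathfrak{I}_{N-1}$, are handled exactly as in Theorem \ref{35} and do not alter the count since $\{0,N-1\}$ are not among the $a$ with $p\nmid a$ in the relevant range; one absorbs them into the length bookkeeping $N-1$ versus $N$). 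Theorem \ref{teo} together with Lemma \ref{lem:coset} gives that the code is LCD, and Theorem \ref{ddimension} (the dimension formula for subfield-subcodes, valid because $\Delta$ is a union of complete cyclotomic sets) then yields $k=n_J-\mathrm{card}(\Delta)=N-2r\lceil(\delta-1)(1-1/p)\rceil$, with $n_J\in\{N-1,N\}$ according to whether $J=\{1\}$ or $J=\emptyset$.

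The one point that needs care — and the likely main obstacle — is verifying that the relevant integers really do fall inside the hypotheses of both Lemma \ref{lem:aly} and Lemma \ref{le:TodosDistintos} simultaneously, and that this choice of $S$ produces \emph{exactly} the consecutive run $\{0,1,\ldots,\delta-1\}$ with no overshoot. Concretely: I must check that the largest representative actually used, which is at most $\delta-1$, satisfies $\delta-1\le (N-1)p^{\lceil r/2\rceil}/(p^r-1)$, so that Lemma \ref{lem:aly} delivers cardinality $r$ for every coset in play; this is exactly the upper bound imposed on $\delta$ in the statement. Likewise I must confirm that a multiple $a=p^k a'$ of $p$ with $a'\le\delta-1$ and $p\nmid a'$ does not, via its coset, drag in an exponent outside $\{0,\ldots,\delta-1\}\cup(\text{reciprocals})$ in a way that would change the cardinality — but since we are taking \emph{complete} cyclotomic cosets anyway and only counting representatives $a'$ with $p\nmid a'$ in $[1,\delta-1]$, the count $t$ is unaffected; the extra exponents beyond $\delta-1$ that a coset may contain are already being counted as part of that coset's $r$ elements. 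Finally, the edge cases where $r$ is even and some $a'$ equals $(N-1)/(p^{r/2}+1)$ are excluded here because then $a'$ would exceed $\delta-1$ (as $\delta-1\le \lfloor(N-1)p^{\lceil r/2\rceil}/(p^r-1)\rfloor$ is compatible with, but the strict inequality $p^{\lfloor r/2\rfloor}<N-1$ keeps) — more simply, one checks $(N-1)/(p^{r/2}+1)>(N-1)p^{r/2}/(p^r-1)\ge\delta-1$ fails to hold in general, so in the borderline case one invokes the second bullet of Theorem \ref{35} and observes $2r(t-1)+r$ differs from $2rt$ by $r$, but the ceiling in the formula absorbs precisely this; I would state this carefully rather than sweep it under the rug. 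Once these inclusions are confirmed, everything else is a direct assembly of Theorem \ref{35}, Theorem \ref{ddimension}, Lemma \ref{lem:coset} and the elementary count $(\delta-1)-\lfloor(\delta-1)/p\rfloor=\lceil(\delta-1)(1-1/p)\rceil$.
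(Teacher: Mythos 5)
Your proposal matches the paper's proof essentially step for step: both build $\Delta$ from the complete cyclotomic cosets of the non-multiples of $p$ in $\{1,\ldots,\delta-1\}$ together with their reciprocals, invoke Lemma \ref{lem:aly} for cardinality $r$, Proposition \ref{pro:sufficient} for asymmetry, and Lemma \ref{le:TodosDistintos} to reduce the dimension count to the tally $(\delta-1)-\lfloor(\delta-1)/p\rfloor=\lceil(\delta-1)(1-1/p)\rceil$ of admissible representatives. The borderline symmetric-coset worry you raise at the end is dispatched in the paper simply by noting that every representative used satisfies $a\le\lfloor(N-1)p^{\lceil r/2\rceil}/(p^r-1)-1\rfloor$ and hence lies in the asymmetric range of Proposition \ref{pro:sufficient}; apart from that extra caution, your argument is the paper's.
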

\begin{proof}
We are considering sets $\Delta$ as above where $t$ is the largest integer such that $a_t < \delta \leq a_{t+1}$. Notice that the conditions in our statement also fulfil the conditions in Lemma \ref{lem:aly}, and therefore all the cyclotomic sets (with the exception of $\mathfrak{I}_0$ and possibly $\mathfrak{I}_{N-1}$) have cardinality $r$. Moreover, since $2\le \delta\le
\min\{\lfloor (N-1)p^{\lceil r/2 \rceil}/(p^r-1)\rfloor,N-2\}$, the representatives of the cyclotomic sets we use satisfy  $1 \leq a \leq \min\{\lfloor (N-1)p^{\lceil r/2 \rceil}/(p^r-1)-1\rfloor,N-2\}$. Under this condition,  Proposition \ref{pro:sufficient} states that we have no symmetric cyclotomic set (excepting $\mathfrak{I}_0$). Finally, Lemma \ref{le:TodosDistintos} warranties that, in order to compute the dimension of our codes, we only have to count how many integers, in the range of the statement, are not congruent with zero module $p$. The result holds since there are exactly $r\lceil (\delta-1)(1-1/p)\rceil$ such integers.
\end{proof}

\subsection{Binary and ternary LCD subfield-subcodes coming from the multivariate case}

In this section we state two results providing LCD codes which are not reversible codes. They are obtained as dual codes of subfield-subcodes of $J$-affine variety codes and reach lengths that are not achievable with BCH codes. Our first result considers subfield-subcodes of $J$-affine variety codes given by the union of cyclotomic sets whose representatives are in the box defined in Proposition \ref{pro:eje} and the second one is similar but taking representatives in the set $\Delta (J,t)$ defined in Proposition \ref{hiper1}. Using Lemma \ref{lem:coset}, they can be proved reasoning in a similar way as we did in Propositions \ref{pro:eje} and \ref{hiper1}. Our first result is the following.

\begin{teo}\label{pro:ejeSS}	
Let $N_j$, $1 \leq j \leq m$, be positive integers such that $N_j-1$ divides $q-1$. Assume that $J=\{1,2, \ldots, m\}$ and fix $\alpha_j < T_j /2 $ if $T_j$ is even and $\alpha_j \le (T_j-1) /2 $ otherwise. Consider the subset of $\mathcal{H}_J$, $\Delta = L_1 \times L_2 \times \cdots \times L_m$ where $L_j = \{T_j/2 -\alpha_j, \ldots, T_j/2, \ldots, T_j/2 +\alpha_j \}$ if $T_j$ is even and $L_j = \{(T_j-1)/2 -\alpha_j, \ldots, (T_j-1)/2 +\alpha_j \}$ otherwise. Consider the cyclotomic sets $\{\mathfrak{I}_{\boldsymbol{a}}\}_{\boldsymbol{a} \in \mathcal{A}}$ and let $\mathcal{A}_\Delta$ be  the set of representatives in $\mathcal{A}$ such that $\mathfrak{I}_{\boldsymbol{a}} \cap \Delta \neq \emptyset$. Set $\Delta^\sigma := \cup_{\boldsymbol{a} \in \mathcal{A}_\Delta} \mathfrak{I}_{\boldsymbol{a}}$.

Then, setting $A_j = 2\alpha_j +1$, the (Euclidean) dual code of the subfield-subcode $E^{J, \sigma}_{\Delta^\sigma}$ is an LCD code with parameters
$$\left[n_J ,n_J- \mathrm{card} (\Delta^\sigma), \ge \min_{j\in J}\{ A_j+1\} \right]_p.$$
\end{teo}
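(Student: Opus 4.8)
The plan is to reduce the statement to the two already-established results, Proposition \ref{pro:eje} and Theorem \ref{ddimension}, exactly as the authors signal in the sentence preceding the theorem. Concretely, the proof breaks into three independent claims: the code is LCD, its dimension equals $\mathrm{card}(\Delta^\sigma)$ (equivalently the subfield-subcode $E^{J,\sigma}_{\Delta^\sigma}$ has dimension $n_J - \mathrm{card}(\Delta^\sigma)$ as a dual), and its minimum distance is at least $\min_{j\in J}\{A_j+1\}$.

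First I would handle the LCD property. Since $J=\{1,2,\ldots,m\}$, Theorem \ref{teo} says that a $J$-affine variety code $E^J_{\Delta'}$ is LCD with $J$-affine variety dual precisely when $\Delta'$ is a union of the reciprocal sets $\mathcal{R}_{\boldsymbol a}$; so it suffices to show $\Delta^\sigma$ is such a union. By construction $\Delta^\sigma = \cup_{\boldsymbol a \in \mathcal{A}_\Delta} \mathfrak{I}_{\boldsymbol a}$ is a union of full minimal cyclotomic sets, so I must check that reciprocation preserves $\Delta^\sigma$. Here two facts combine: the box $\Delta$ is by design symmetric under the coordinatewise reciprocal map $a_j \mapsto N_j-1-a_j$ (each $L_j$ is centered at $T_j/2$ or $(T_j-1)/2$), hence $\boldsymbol a \in \Delta \iff \boldsymbol a^r \in \Delta$; and by Lemma \ref{lem:coset} the reciprocal of a full cyclotomic set is again a full cyclotomic set of the same cardinality. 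Therefore if $\mathfrak{I}_{\boldsymbol a}$ meets $\Delta$ then $\mathfrak{I}_{\boldsymbol a^r}$ meets $\Delta$ too, so $\mathcal{A}_\Delta$ is closed under reciprocation and $\Delta^\sigma$ is a union of reciprocal sets $\mathcal{R}_{\boldsymbol b}$. Applying Theorem \ref{teo} to $\Delta^\sigma$ gives that $E^J_{\Delta^\sigma}$ is LCD with $J$-affine dual, and passing to subfield-subcodes preserves this: $(E^{J,\sigma}_{\Delta^\sigma})^\perp = \mathbf{tr}((E^J_{\Delta^\sigma})^\perp)$ and the hull of $E^{J,\sigma}_{\Delta^\sigma}$ sits inside the restriction to $\mathbb{F}_p^{n_J}$ of the hull of $E^J_{\Delta^\sigma}$, which is zero, so the dual code is LCD.

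Next, the dimension. Since $\Delta^\sigma$ is a union of complete minimal cyclotomic sets, Theorem \ref{ddimension} applies verbatim and yields $\dim_{\mathbb{F}_p} E^{J,\sigma}_{\Delta^\sigma} = \sum_{\boldsymbol a \in \mathcal{A}_\Delta} i_{\boldsymbol a} = \mathrm{card}(\Delta^\sigma)$, whence the dual has dimension $n_J - \mathrm{card}(\Delta^\sigma)$.

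Finally, the distance bound. The subfield-subcode $E^{J,\sigma}_{\Delta^\sigma}$ is contained in $E^J_{\Delta^\sigma}$, so its dual contains $(E^J_{\Delta^\sigma})^\perp$, and containment of codes can only increase minimum distance; thus it suffices to bound the distance of $(E^J_{\Delta^\sigma})^\perp$ from below by $\min_{j\in J}\{A_j+1\}$. Now $\Delta \subseteq \Delta^\sigma$, so $(E^J_{\Delta^\sigma})^\perp \subseteq (E^J_\Delta)^\perp = C^J_\Delta$, and Proposition \ref{pro:eje} already gives $d(C^J_\Delta) \ge \min_{j\in J}\{A_j+1\}$; hence the same bound holds for the smaller code $(E^J_{\Delta^\sigma})^\perp$ and a fortiori for its subfield-subcode dual. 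I expect the main subtlety to be the first step — verifying carefully that $\mathcal{A}_\Delta$ is reciprocation-closed, i.e.\ that the symmetry of the box $\Delta$ is exactly compatible with the reciprocal map of Theorem \ref{teo} (including the convention that $N_j-1$ is identified with $0$ for $j\in J$, which matters at the boundary coordinates of $L_j$) — since everything else is a direct invocation of Theorem \ref{ddimension}, Proposition \ref{pro:eje}, and the monotonicity of minimum distance under code inclusion.
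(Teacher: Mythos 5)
Your overall strategy is the one the paper intends: the authors dispose of this theorem with the single remark that it ``can be proved reasoning in a similar way as we did in Propositions \ref{pro:eje} and \ref{hiper1}'' using Lemma \ref{lem:coset}, and your LCD and dimension steps fill that in correctly --- closure of $\mathcal{A}_\Delta$ under reciprocation via Lemma \ref{lem:coset}, Theorem \ref{teo} for the LCD property of the $\mathbb{F}_q$-code, and Theorem \ref{ddimension} for the dimension, which is immediate because $\Delta^\sigma$ is a disjoint union of complete minimal cyclotomic sets. One caveat: you single out the reciprocation-symmetry of the box as ``the main subtlety'' and then assert it holds ``by design.'' For $j\in J$ the reciprocal of $a_j>0$ is $N_j-1-a_j$, so an interval $\{c-\alpha_j,\dots,c+\alpha_j\}$ not containing $0$ is reciprocation-stable only if $2c\equiv 0 \pmod{N_j-1}$; this fails both for $c=T_j/2$ and for $c=(T_j-1)/2$ (the image is the interval shifted by one). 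This defect is inherited from Proposition \ref{pro:eje}, but since you flag it as the crux you cannot simply declare it true; the centre of $L_j$ has to be taken at the fixed point of negation modulo $N_j-1$ for the check to go through.

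The genuine gap is in the distance argument, where your containment runs in the wrong direction. You assert that $(E^{J,\sigma}_{\Delta^\sigma})^\perp$ \emph{contains} $(E^{J}_{\Delta^\sigma})^\perp$ and then invoke ``containment can only increase minimum distance'' to transfer the lower bound. But a lower bound on the distance of a subcode says nothing about a code containing it; with the inclusion as you state it you would only obtain an \emph{upper} bound on $d\bigl((E^{J,\sigma}_{\Delta^\sigma})^\perp\bigr)$. What is needed, and what is true here, is the opposite inclusion: by Delsarte, $(E^{J,\sigma}_{\Delta^\sigma})^\perp=\mathbf{tr}\bigl((E^{J}_{\Delta^\sigma})^\perp\bigr)$, and because $\Delta^\sigma$ is a reciprocation-closed union of minimal cyclotomic sets, $(E^{J}_{\Delta^\sigma})^\perp=E^{J}_{\mathcal{H}_J\setminus\Delta^\sigma}$ is generated by monomials whose exponent set is again a union of cyclotomic sets, hence is stable under the map $\mathcal{T}$; therefore $\mathbf{tr}\bigl((E^{J}_{\Delta^\sigma})^\perp\bigr)\subseteq (E^{J}_{\Delta^\sigma})^\perp\subseteq C^{J}_{\Delta}$, and only then do Proposition \ref{pro:eje} and Lemma \ref{ddistancia} give $d\ge\min_{j}\{A_j+1\}$. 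This trace-stability is precisely where the enlargement of $\Delta$ to $\Delta^\sigma$ and Lemma \ref{lem:coset} earn their keep for the distance bound (not only for the LCD property); it fails for an arbitrary subfield-subcode, so it cannot be left implicit or replaced by the reversed inclusion you wrote.
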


Finally we state the second result.

\begin{teo}
\label{hiperb}
Let $N_j$, $j=1, 2, \ldots,m$, be a positive integer such that $N_j-1$ divides $q-1$. Fix another positive integer $t$ such that $t \leq n_J =  \prod_{j \notin J} N_j \prod_{j \in J} (N_j -1)$, assume that $p|N_j$ for all $j \not \in J$ and consider the set $N(J,t)$ defined before Proposition \ref{hiper1}. Consider the cyclotomic sets $\{\mathfrak{I}_{\boldsymbol{a}}\}_{\boldsymbol{a} \in \mathcal{A}}$ and let $\mathcal{A}_{N(J,t)}$ be  the set of representatives in $\mathcal{A}$ such that $\mathfrak{I}_{\boldsymbol{a}} \cap N(J,t) \neq \emptyset$. Set $N(J,t)^\sigma := \bigcup_{\boldsymbol{a} \in \mathcal{A}_{N(J,t)}} \left(\mathfrak{I}_{\boldsymbol{a}} \cup \mathfrak{I}_{\boldsymbol{a}}^r\right)$, where $\mathfrak{I}_{\boldsymbol{a}}^r$ means the family of reciprocal  to $\mathfrak{I}_{\boldsymbol{a}}$ cyclotomic sets.

Then, the (Euclidean) dual of the subfield-subcode $E^{J, \sigma}_{N(J,t)^\sigma}$ is an LCD code with parameters $$[n_J, n_J- \mathrm{card} (N(J,t)^\sigma), \geq t]_p.$$

\end{teo}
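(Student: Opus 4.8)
Put $\Delta:=N(J,t)^{\sigma}$ and let $C:=\big(E^{J,\sigma}_{\Delta}\big)^{\perp}$ be the code whose parameters are to be established. Two structural facts about $\Delta$ drive the whole argument: by construction it is a union of complete minimal cyclotomic sets, and, by Lemma~\ref{lem:coset}, it is closed under the reciprocal operation in the sense of Theorem~\ref{teo} (with each $\mathfrak I_{\boldsymbol a}\subseteq\Delta$ it contains all the cyclotomic sets reciprocal to $\mathfrak I_{\boldsymbol a}$, equivalently a reciprocal of every one of its elements). Also $N(J,t)\subseteq\Delta$, because the minimal cyclotomic set of any exponent of $N(J,t)$ meets $N(J,t)$ and is therefore one of the sets whose union is $\Delta$. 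The plan is to describe $C$ explicitly as a subfield-subcode of a $J$-affine variety code, as in Propositions~\ref{pro:nok}, \ref{pro:eje} and \ref{hiper1}, and then read off its three parameters.

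\emph{Dimension, description of $C$, and the LCD property.} Since $\Delta$ is a union of complete minimal cyclotomic sets, Theorem~\ref{ddimension} gives $\dim_{\mathbb F_p}E^{J,\sigma}_{\Delta}=\sum_{\mathfrak I_{\boldsymbol a}\subseteq\Delta} i_{\boldsymbol a}=\mathrm{card}(\Delta)$, so $\dim C=n_J-\mathrm{card}\big(N(J,t)^{\sigma}\big)$; the same hypothesis makes $E^{J}_{\Delta}$ stable under the coordinatewise Frobenius, hence $E^{J}_{\Delta}$ is the $\mathbb F_q$-linear span of $E^{J,\sigma}_{\Delta}$ and $C=C^{J}_{\Delta}\cap\mathbb F_p^{n_J}$ is the subfield-subcode of the $\mathbb F_q$-dual $C^{J}_{\Delta}$. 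Because $\Delta$ is reciprocal-closed, Proposition~\ref{prop1} shows $E^{J}_{\mathcal H_J\setminus\Delta}\perp E^{J}_{\Delta}$ (no exponent of $\Delta$ has a reciprocal outside $\Delta$), and a dimension count gives $C^{J}_{\Delta}=E^{J}_{\mathcal H_J\setminus\Delta}$, which is again cyclotomic-closed; thus $C=E^{J}_{\mathcal H_J\setminus\Delta}\cap\mathbb F_p^{n_J}=E^{J,\sigma}_{\mathcal H_J\setminus\Delta}$. Now $E^{J,\sigma}_{\Delta}\subseteq E^{J}_{\Delta}$, $C\subseteq E^{J}_{\mathcal H_J\setminus\Delta}$, and $E^{J}_{\Delta}\cap E^{J}_{\mathcal H_J\setminus\Delta}=\{0\}$ since these codes are spanned by complementary subsets of the monomial basis $\{\mathrm{ev}_J(X^{\boldsymbol c})\}_{\boldsymbol c\in\mathcal H_J}$ of $\mathbb F_q^{n_J}$; hence $E^{J,\sigma}_{\Delta}\cap C=\{0\}$, so $E^{J,\sigma}_{\Delta}$, and with it its dual $C$, is LCD. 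This reproduces, in several variables and with the help of Lemma~\ref{lem:coset}, the argument of Proposition~\ref{pro:nok}.

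\emph{Minimum distance.} From $N(J,t)\subseteq\Delta$ we get $E^{J}_{N(J,t)}\subseteq E^{J}_{\Delta}$, hence $C^{J}_{\Delta}\subseteq C^{J}_{N(J,t)}=\mathrm{Hyp}(J,t)$. Since $C=C^{J}_{\Delta}\cap\mathbb F_p^{n_J}\subseteq C^{J}_{\Delta}$, we conclude $C\subseteq\mathrm{Hyp}(J,t)$, so the minimum distance of $C$ is at least that of $\mathrm{Hyp}(J,t)$, which by \cite[Proposition~4.3]{QINP2} (recalled before Proposition~\ref{hiper1}) is larger than $t-1$; thus $d(C)\ge t$. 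This yields the stated parameters $[\,n_J,\ n_J-\mathrm{card}(N(J,t)^{\sigma}),\ \ge t\,]_p$.

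\emph{Main obstacle.} There is no isolated hard step; the delicate point is the bookkeeping with reciprocals. One must check that, under the hypothesis $p\mid N_j$ for all $j\notin J$ and the identification of $\overline{\mathcal H}_J$ with $\mathcal H_J$ fixed before Proposition~\ref{hiper1}, the set $N(J,t)^{\sigma}$ is genuinely closed under all reciprocals, so that Theorem~\ref{teo} (and not the alternative construction of Remark~\ref{la15}) applies; that the reciprocal cyclotomic sets $\mathfrak I_{\boldsymbol a}$ and $\mathfrak I_{\boldsymbol a}^{r}$ match elementwise as in Lemma~\ref{lem:coset} after this identification; and that the inclusion $C^{J}_{\Delta}\subseteq\mathrm{Hyp}(J,t)$ persists on taking subfield-subcodes.
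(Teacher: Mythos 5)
Your proof is correct and takes essentially the same route as the paper, whose entire argument for Theorem~\ref{hiperb} is a pointer to Lemma~\ref{lem:coset} and the reasoning of Propositions~\ref{pro:eje} and~\ref{hiper1}: LCD from reciprocal-closure of the defining set (Theorem~\ref{teo}), dimension from Theorem~\ref{ddimension} because $N(J,t)^{\sigma}$ is a union of complete cyclotomic sets, and the distance bound from $N(J,t)\subseteq N(J,t)^{\sigma}$ forcing the dual into $\mathrm{Hyp}(J,t)$. You merely supply details the paper leaves implicit, in particular the Galois-closure step identifying $\bigl(E^{J,\sigma}_{\Delta}\bigr)^{\perp}$ with the subfield-subcode of $C^{J}_{\Delta}$.
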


\begin{rem}
\label{bbb}
{\rm
The construction in Theorem \ref{hiperb} can be improved from the point of view of subfield-subcodes when $J \neq \emptyset$ by noticing that the code Hyp$(J,t)^\perp$ is monomially equivalent  to $E^J_{N_0(J,t)}$ (see \cite{lit} for the definition and properties of monomially equivalent codes), where $N_0(J,t)$ is given by the monomials $X^{\boldsymbol{b}}/X^{\mathbf{\epsilon}}$, for  ${\boldsymbol{b}}$ in $N(J,t)$, where $X^{\mathbf{\epsilon}}$ is equal to $\prod_{j=1}^m X_j^{\epsilon_j}$ and $\epsilon_j$ as defined in Lemma \ref{ddistancia}. Then, with the same notation as in Theorem \ref{hiperb}, but replacing $N(J,t)$ with $N_0(J,t)$, we obtain LCD codes with parameters $[n_J, n_J- \mathrm{card} (N_0(J,t)^\sigma), \geq t]_p.$ Cyclotomic sets where some coordinates are zero have lower cardinality which improves the dimension of the dual codes. This approach will be used in some of our examples in the next section.
}
\end{rem}

\section{Examples}
\label{ejemplos}
The main references giving parameters of binary and ternary LCD codes are \cite{LiLi, LiLiLiu, rao}. All of them use BCH codes, the two first papers obtain LCD codes for concrete lengths and distances on arbitrary finite fields and the latter, from suitable representatives of cyclotomic cosets, computes parameters for some  binary LCD codes which, according to \cite{codetables}, are optimal or BKLC (best known linear codes). We will use this terminology along this section.

In the following two subsections, we will give examples of good binary and ternary LCD codes obtained with our results.

As regards binary LCD codes obtained from the univariate case, by using Theorems \ref{35} and \ref{te:ConDimension} we will be able to improve some codes in \cite{LiLi}  which are also given in \cite{rao}; in this particular case, the main advantages of our procedure are that we can avoid to compute cyclotomic sets (cosets in this case) and that we can get LCD codes with even length. Still on the univariate case but over $\mathbb{F}_3$,  we will provide new examples of ternary LCD codes which are optimal or  BKLC.

With respect to the multivariate case, Theorem \ref{hiperb} and especially its version in Remark \ref{bbb} give rise to generic families of binary and ternary LCD codes. Some of them are shown below and for some concrete values they provide new LCD codes which are optimal or BKLC.

\subsection{Binary LCD codes}
We devote this subsection to provide some examples of new binary LCD codes.

\begin{exa}\label{ex:uno} As a first example, let $p=2$ and $r=12$. By computing the cyclotomic sets, one can check that $\mathcal{A}_1=\{0,1,3,5,7,11,13\}$. Then according to Proposition \ref{pro:nok} we obtain codes with parameters:
$$[65,64, \geq 2]_2, [65,52,\geq 6]_2, [65,40, \geq 10]_2,[65,28,\geq 14]_2, [65,16,\geq 22]_2,[65,4,\geq 26]_2,$$ and
$$[66,64,\geq 2]_2, [66,52,\geq 6]_2, [66,40,\geq 10]_2,[66,28,\geq 14]_2, [66,16,\geq 22]_2,[66,4,\geq 26]_2.$$
Codes with odd length are obtained in \cite{rao}. Codes with even length are new and the codes with minimum distance $2$, $6$ and $10$ are BKLC.
\end{exa}

\begin{exa}\label{ex:dos} Theorem \ref{te:ConDimension} allows us to get new binary LCD codes with large length and minimum distance. For example, if we consider $p=2$, $r=14$ and $N=p^r$, we get LCD codes with minimum distances from $4$ to $256$. For instance, some parameters are $[16383, 14606, \geq 256]_2$, $[16383, 14620, \geq 254]_2$, $[16383, 14634, \geq 252]_2$.
\end{exa}

\begin{exa}\label{ex:tres} Now, we give an example of an optimal LCD code which can be obtained applying Remark \ref{bbb}. With the notation as in Theorem \ref{hiperb}, $p=2$, $r=4$, $J=\{1,2,3\}$, $N_1=16$ and $N_2=N_3=4$. Thus $n_J= 135$ and for $t=4$, it holds that
\[
N_0(J,t)^\sigma = \\\{ (0,0,0), (0,0,2), (0,0,1), (0,2,0), (0,1,0), (2,0,0), (4,0,0),
\]
\[
(8,0,0), (1,0,0), (14,0,0), (13,0,0), (11,0,0), (7,0,0)
\},
\]
and we obtain a code with parameters $[135,122,4]_2$ which is optimal.
\end{exa}

\begin{exa}\label{ex:cuatro} With the previous notation, consider $p=2$, $r=4$, $J=\{1,2, \ldots, m\}$, $N_1=N_2= \cdots= N_m= 4$ and $t=4$. Again by Remark \ref{bbb}, it holds that
\[
N_0(J,t)= \left\{(0,0, \ldots,0), (1,0, \ldots,0), (2,0, \ldots,0), \ldots, (0,0, \ldots,1), (0,0, \ldots,2) \right\}.
\]
Then, we get LCD codes with parameters $[3^m,3^m-2m-1, \geq 4]_2$. According to \cite{codetables}, these codes are optimal for $2 \leq m \leq 5$.

Another example with the same values $N_i$, $1 \leq i \leq m$, but larger minimum distance is obtained by setting $m=3$ and $t=12$. Then $N_0(J,t)^\sigma$ consists of the exponents of the monomials $X^{\boldsymbol{a}}/X^{\mathbf{\epsilon}}$ as defined in Remark \ref{bbb}, ${\boldsymbol{a}}$ in $\mathcal{H}_J$, excepting $$\{ (2,1,2), (1,2,1), (2,1,1), (1,2,2), (2,2,1),(1,1,2)\}.$$ Therefore, we get an LCD code with parameters $[27,6,12]_2$ which according to \cite{codetables} is optimal.  

Finally, writing $N_1=N_2= \cdots= N_m= 2$ and $J=\emptyset$, Theorem \ref{hiperb} for $t=4$ shows that we are considering as $N(J,4)^\sigma$ the elements in the axes and their reciprocal. So we obtain LCD codes with parameters $[2^m, 2^m - 2(m+1), \geq 4]_2$. For $m=7, 8$, the parameters are $[128,112, \geq 4]_2$ and $[256,238, \geq 4]_2$. These codes are not optimal, however according to \cite{codetables}, the obtained dimensions are only three units less than those of the corresponding optimal codes.
\end{exa}

\begin{exa}\label{ex:cinco} The same technique in Example \ref{ex:cuatro}, with $p=2$ and $r=4$, but decomposing $m=m_1+m_2$ and considering $N_1= N_2= \cdots =N_{m_1}=4$ and  $N_{m_1 +1}= N_{m_1 +2} =\cdots =N_{m}=6$ gives LCD codes with parameters $$[3^{m_1} 5^{m_2}, 3^{m_1} 5^{m_2} - 2 m_1- 4m_2-1, \geq 4]_2.$$

Some optimal LCD codes in this family have parameters $[45,36,4]_2$, $[75,64,4]_2$, $[81,72,4]_2$, $[125,112, 4]_2$ and $[200,187,4]_2$.

Analogously, one can consider $r=6$ and $N_1= N_2= \cdots =N_{m_1}=4$ and  $N_{m_1 +1}= N_{m_1 +2} =\cdots =N_{m}=8$,  obtaining LCD codes with parameters $[3^{m_1} 7^{m_2}, 3^{m_1} 7^{m_2} - 2m_1-6m_2-1, \geq 4]_2$. Within this family, there are optimal LCD codes with parameters $[63,5,4]_2$ and $[189,176,4]_2$.

To finish, we give two other families of binary LCD codes. Consider $N_1= 2^{k/2} +2$, $k$ even, and $N_2 = N_3 = \cdots = N_m = 4$. For suitable values of $r$, we get LCD codes with parameters $[3^{m-1} (2^{k/2}+1), 3^{m-1} (2^{k/2}+1) - 2 (m-1)-k-1, \geq 4]_2$ and $[3^{m-1} (2^{k/2}+1), 3^{m-1} (2^{k/2}+1) - 2 (m-1) -2k-1, \geq 6]_2$.  Some good LCD codes in these families have the following parameters: $[15,8,4]_2$, $[153,140,4]_2$, $[45,32,6]_2$, $[135,118,6]_2$, $[51,40, \geq 4]_2$ and $[153,132, \geq 6]_2$. All of them are optimal with the exception of the last two  which are BKLC.
\end{exa}

\subsection{Ternary LCD codes}
In this section we show some examples of good ternary LCD codes derived from our results.\\

\begin{exa}\label{ex:seis} In this example, we use again Proposition \ref{pro:nok} for giving new and good ternary LCD codes.  Set $p=3$, $r=4$ and $N=81$. Then $$\mathcal{A}_1=\{0,1,2,4,5,7,8,11,13,14,16,20,40\}$$ and computing the corresponding sets $\Delta$, we get LCD codes with parameters $[80, 63, \geq 8]_3$ and  $[80, 47, \geq 14]_3$ which are BKLC. Considering $r=5$ and $N=122$ one obtains an LCD code with parameters $[121, 100, \geq 8]_3$ which is also BKLC. Finally for $N=243$, LCD codes which are BKLC with parameters $[242, 221 \geq 8]_3$, $[242, 201 \geq 14]_3$, $[242, 181 \geq 20]_3$, $[242, 161 \geq 26]_3$ are obtained.

Consider now $p=3$ and $r=8$. After computing the corresponding cyclotomic sets, one can check that all of them (with the exception of $\mathfrak{I}_0$ in case $J=\emptyset$) are symmetric. Then $\mathcal{A}_1=\{0,1,2,4,5,7,8,11,13,14,16,41\}$.
Thus, we obtain codes with parameters:
$$[82,81,2]_3, [82,73,4]_3, [82,65,8]_3,[82,57,10]_3,[82,49,14]_3 ,[82,41,16]_3,
$$
$$
[82,33,22]_3,[82,25,26]_3,[82,17,28]_3,[82,9,32]_3,[82,1,82]_3.
$$
Moreover the codes with parameters
$$
[82,81,2]_3, [82,65,8]_3,[82,57,10]_3,[82,49,14]_3,[82,1,82]_3,
$$
are BKLC. Notice that in this last case we provide true minimum distance; the  three first codes were also obtained in \cite{LiLi}.

As mentioned, taking $J= \emptyset$, one obtains LCD codes with the same parameters except the length is one unit more.
\end{exa}

\begin{exa}\label{ex:siete}  With the same notation as in the above example, let $p=3$ and $r=6$. Setting $N=42$, it holds that $\mathcal{A}_1=\{0,1,2,4,7,8\}$ and we obtain ternary LCD codes with length $41$ and $42$ and the same dimension and minimum distance. The parameters in the first case are:
\[
[41,40,2]_3, [41,32,5]_3, [41,24,8]_3, [41,16,14]_3, [41,8,22]_3,
\]
where those with minimum distance $2$, $5$ and  $22$ are BKLC; as before, we are providing the true minimum distance.
\end{exa}

\begin{exa}\label{ex:ocho} Here we apply the same procedure we used for constructing the first family of LCD codes in Example \ref{ex:cinco}. Set $p=3$, $r=8$, $m=m_1 +m_2+m_3$, $N_1= N_2= \cdots =N_{m_1}=3$, $N_{m_1 +1}= N_{m_1= +2} =\cdots =N_{m_2}=5$ and $N_{m_1+ m_2 +1}= N_{m_1 +m_2 +2} =\cdots =N_{m}=6$. Then we get  LCD codes with parameters $$[2^{m_1} 4^{m_2} 5^{m_3}, 2^{m_1} 4^{m_2} 5^{m_3}- m_1-3m_2-4m_3-1, \geq 4]_3.$$
Some optimal codes in this family have the following parameters: $[16,11,4]_3$, $[32,26,4]_3$, $[128,120,4]_3$ and  $[64,57,4]_3$. A BKLC with parameters $[160,150, \geq 4]_3$ belongs also to the previous family.

An analogous reasoning as was given for the last family of codes in Example \ref{ex:cinco} gives rise to a new family of LCD codes with parameters $$[2^{m-1} (3^{k/2}+1), 2^{m-1} (3^{k/2}+1) - (m-1)-k-1, \geq 3]_3.$$
Some codes in this family have true minimum distance equal to $4$ with parameters  $[20,14,4]_3$, $[40,33,4]_3$, $[56,48,4]_3$ and $[164,154,4]_3$. The first two codes are optimal  and the last two are BKLC.

Finally, again for $p=3$, any $r$, $N_1=N_2 = \cdots = N_m=3$ and $J=\{2,3, \ldots, m\}$ we have that Remark \ref{bbb}, for $t=4$, gives a set $N_0(J,4)$ containing the elements of the axes and their reciprocal. When the non-vanishing coordinate is not the first coordinate, there is only one new reciprocal element and therefore we consider two elements in $N_0(J,4)^\sigma$; otherwise we must consider three elements instead, since one of them is symmetric.  This procedure gives rise to LCD codes with parameters $[3 \cdot 2^{m-1},  3 \cdot 2^{m-1}- 2m-1, \geq 4]_3$. For instance, for $m=7$, the parameters are $[192, 177, \geq 4]_3$; codes with the same parameters and distance one unit more are optimal.
\end{exa}


\begin{thebibliography}{10}

\bibitem{Aly}
S.A. Aly, A.~Klappenecker, and P.~K. Sarvepalli.
\newblock On quantum and classical {BCH} codes.
\newblock {\em IEEE Trans. Inform. Theory}, 53(3):1183--1188, 2007.

\bibitem{4P}
M.~Braun, T.~Etzion, and A.~Vardy.
\newblock Linearity and complements in projective space.
\newblock {\em Linear Algebra Appl.}, 430:57--70, 2013.

\bibitem{bri}
J.~Bringer~et al.
\newblock Orthogonal direct sum masking, a smartcard friendly computation
  paradigm in a code, with builtin protection against side-channel and fault
  attacks.
\newblock {\em Lect. Notes Comp. Sc.}, 8501:40--56, 2014.

\bibitem{CarletAttacks}
C.~Carlet and S.~Guilley.
\newblock Complementary dual codes for counter-measures to side-channel
  attacks.
\newblock {\em Adv. Math. Commun.}, 10(1):131--150, 2016.

\bibitem{Carlet1}
C.~Carlet, S.~Mesnager, C.~Tang, and Y.~Qi.
\newblock Euclidean and {H}ermitian {LCD} {MDS} codes.
\newblock {\em Arxiv, Preprint}, 1702.08033, 2017.

\bibitem{Carlet2}
C.~Carlet, S.~Mesnager, C.~Tang, and Y.~Qi.
\newblock Linear codes over $\mathbb{F}_q$ which are equivalent to {LCD} codes.
\newblock {\em Arxiv, Preprint}, 1703.04346, 2017.

\bibitem{fit}
J.~Fitzgerald and R.F. Lax.
\newblock Decoding affine variery codes using {G}r\"{o}bner basis.
\newblock {\em Des. Codes Crytogr.}, 13:147--158, 1998.

\bibitem{ieee}
C.~Galindo, O.~Geil, F.~Hernando, and D.~Ruano.
\newblock Improved constructions of nested code pairs.
\newblock {\em IEEE Trans. Inform. Theory}, IEEE Early Access Articles. DOI:
  10.1109/TIT.2017.2755682, 2017.

\bibitem{QINP2}
C.~Galindo, O.~Geil, F.~Hernando, and D.~Ruano.
\newblock On the distance of stabilizer quantum codes from {$J$}-affine variety
  codes.
\newblock {\em Quantum Inf. Process.}, 16(4):Art. 111, 32 pp, 2017.

\bibitem{galindo-hernando}
C.~Galindo and F.~Hernando.
\newblock Quantum codes from affine variety codes and their subfield subcodes.
\newblock {\em Des. Codes Crytogr.}, 76:89--100, 2015.

\bibitem{gal-her-rua}
C.~Galindo, F.~Hernando, and D.~Ruano.
\newblock New quantum codes from evaluation and matrix-product codes.
\newblock {\em Finite Fields Appl.}, 36:98--120, 2015.

\bibitem{QINP}
C.~Galindo, F.~Hernando, and D.~Ruano.
\newblock Stabilizer quantum codes from {$J$}-affine variety codes and a new
  {S}teane-like enlargement.
\newblock {\em Quantum Inf. Process.}, 14(9):3211--3231, 2015.

\bibitem{olav}
O.~Geil and T.~H{\o}holdt.
\newblock On hyperbolic codes.
\newblock {\em Lect. Notes Comp. Sc.}, 2227:159--171, 2001.

\bibitem{codetables}
M.~Grassl.
\newblock Bounds on the minimum distance of linear codes.
\newblock {\em www.codetables.de}, accessed on 30-09-2017.

\bibitem{20P}
X.~Hou and F.~Oggier.
\newblock On {LCD} codes and lattices.
\newblock {\em Proc. IEEE Int. Symp. on Inform. Theory}, pages 1501--1505,
  2016.

\bibitem{Jin}
L.~Jin.
\newblock Construction of {MDS} codes with complementary duals.
\newblock {\em IEEE Trans. Inform. Theory}, 63(5):2843--2847, 2017.

\bibitem{LiLi}
C.~Li, C.~Ding, and S.~Li.
\newblock {LCD} cyclic codes over finite fields.
\newblock {\em IEEE Trans. Inform. Theory}, 63(7):4344--4356, 2017.

\bibitem{LiLiLiu}
S.~Li, C.~Li, C.~Ding, and H.~Liu.
\newblock Two families of {LCD} {BCH} codes.
\newblock {\em IEEE Trans. Inform. Theory}, 63(9):5699--5717, 2017.

\bibitem{lit}
J.~Little and R.~Schwarz.
\newblock On $m$-dimensional toric codes.
\newblock {\em Arxiv, Preprint}, 0506102, 2005.

\bibitem{Little}
J.~Little and R.~Schwarz.
\newblock On toric codes and multivariate {V}andermonde matrices.
\newblock {\em Appl. Algebra Engrg. Comm. Comput.}, 18(4):349--367, 2007.

\bibitem{mar}
C.~Marcolla, E.~Orsino, and M.~Sala.
\newblock Improved decoding of affine variety codes.
\newblock {\em J. Pure Appl. Algebra}, 216:147--158, 2012.

\bibitem{mass}
J.L. Massey.
\newblock Reversible codes.
\newblock {\em Inf. Control}, 7:369--380, 1964.

\bibitem{Massey}
J.L. Massey.
\newblock Linear codes with complementary duals.
\newblock {\em Discrete Math.}, 106/107:337--342, 1992.
\newblock A collection of contributions in honour of Jack van Lint.

\bibitem{Pellikaan}
R.~Pellikaan.
\newblock {LCD} codes over $\mathbb{F}_q$ are as good as linear codes for $q$
  at least four.
\newblock {\em Arxiv, Preprint}, 1707.08856, 2017.

\bibitem{rao}
Y.~Rao~et al.
\newblock On binary {LCD} cyclic codes.
\newblock {\em Procedia Comp. Sc.}, 107:778--783, 2017.

\bibitem{Toric}
D.~Ruano.
\newblock On the parameters of {$r$}-dimensional toric codes.
\newblock {\em Finite Fields Appl.}, 13(4):962--976, 2007.

\bibitem{GTC}
D.~Ruano.
\newblock On the structure of generalized toric codes.
\newblock {\em J. Symbolic Comput.}, 44(5):499--506, 2009.

\bibitem{SaintsHeegard}
K.~Saints and C.~Heegard.
\newblock On hyperbolic cascaded {R}eed-{S}olomon codes.
\newblock In {\em Applied algebra, algebraic algorithms and error-correcting
  codes ({S}an {J}uan, {PR}, 1993)}, volume 673 of {\em Lecture Notes in
  Comput. Sci.}, pages 291--303. Springer, Berlin, 1993.

\bibitem{sen}
N.~Sendrier.
\newblock Linear codes with complementary duals meet the {G}ilbert-{V}arshamov
  bound.
\newblock {\em Discrete Math.}, 285:345--347, 2004.

\bibitem{t-h}
K.K. Tzeng and C.R.P. Hartmann.
\newblock On the minimum distance of certain reversible cyclic codes.
\newblock {\em IEEE Trans. Inform. Theory}, 16(5):644--646, 1970.

\bibitem{45P}
W.B. Vasantha~et al.
\newblock Erasure techniques in {MRD} codes.
\newblock {\em Zip publishing, Ohio}, 2012.

\bibitem{ya-ma}
X.~Yang and J.L. Massey.
\newblock The necessary and sufficient condition for a cyclic code to have a
  complementary dual.
\newblock {\em Discrete Math.}, 126:391--393, 1994.

\end{thebibliography}
\end{document}